\documentclass{article}
\usepackage[utf8]{inputenc}
\pdfoutput=1

\usepackage{fullpage}
\usepackage{amsmath}
\usepackage{algorithmic}

\title{Profile-Based Privacy for Locally Private Computations}
\author{Joseph Geumlek and Kamalika Chaudhuri }

\usepackage{color}
\usepackage{amsthm}
\usepackage{subfigure}

\newtheorem{Thm}{Theorem}
\newtheorem{theorem}{Theorem}
\newtheorem{Defn}[theorem]{Definition}
\newtheorem{Obs}[theorem]{Observation}
\usepackage{amssymb}
\usepackage{graphicx}
\usepackage{algorithm}
\usepackage{algorithmic}
\usepackage{graphicx}

\begin{document}

\maketitle

\begin{abstract}
Differential privacy has emerged as a gold standard in privacy-preserving data analysis. A popular variant is local differential privacy, where the data holder is the trusted curator. A major barrier, however, towards a wider adoption of this model is that it offers a poor privacy-utility trade-off.

In this work, we address this problem by introducing a new variant of local privacy called {\em{profile-based privacy}}. The central idea is that the problem setting comes with a graph $G$ of data generating distributions, whose edges encode sensitive pairs of distributions that should be made indistinguishable. This provides higher utility because unlike local differential privacy, we no longer need to make every pair of private values in the domain indistinguishable, and instead only protect the identity of the underlying distribution. We establish privacy properties of the profile-based privacy definition, such as post-processing invariance and graceful composition. Finally, we provide mechanisms that are private in this framework, and show via simulations that they achieve higher utility than the corresponding local differential privacy mechanisms.

\end{abstract}

\section{Introduction}
    A great deal of machine learning in the 21st century is carried out on sensitive data, and hence the field of privacy preserving data analysis is of increasing importance.  Differential privacy~\cite{DMNS06}, introduced in 2006, has become the dominant paradigm for specifying data privacy. A body of compelling results~\cite{CMS11, CSS12, KST12, FGWC16, Wang15} have been achieved in the "centralized" model, in which a trusted data curator has raw access to the data while performing the privacy-preserving operations. However, such trust is not always easy to achieve, especially when the trust must also extend to all future uses of the data.


    An implementation of differential privacy that has been particularly popular in industrial applications makes each user into their own trusted curator. Commonly referred to as {\em{local differential privacy}}~\cite{duchi2013local}, this model consists of users locally privatizing their own data before submission to an aggregate data curator. Due to the strong robustness of differential privacy under further computations, this model preserves privacy regardless of the trust in the aggregate curator, now or in the future. Two popular industrial systems implementing local differential privacy include Google's RAPPOR~\cite{erlingsson2014rappor} and Apple's iOS data collection systems.  
    
However, a major barrier for the local model is the undesirable utility sacrifices of the submitted data. A local differential privacy implementation achieves much lower utility than a similar method that assumes trusts in the curator. Strong lower bounds have been found for the local framework~\cite{duchi2013}, leading to pessimistic results requiring massive amounts of data to achieve both privacy and utility.

In this work, we address this challenge by proposing a new restricted privacy definition, called profile-based privacy. The central idea relies on specifying a graph $G$ of {\em{profiles}} or data generating distributions, where edges encode sensitive pairs of distributions that should be made indistinguishable. Our framework does not require that all features of the observed data be obscured; instead only the information connected to identifying the distributions must be perturbed. This offers privacy by obscuring data from sensitive pairs of profiles while side-stepping the utility costs of local differential privacy, where every possible pair of observations must be indistinguishable. As a concrete example, suppose we have commute trajectories from a city center to four locations - $A$, $B$, $C$ and $D$, where $A$ and $B$ are north and $C$ and $D$ are south of the center. Profile based privacy can make the trajectories that originate in $A$ and $B$ and those that originate in $C$ and $D$ indistinguishable. This offers better utility than local differential privacy, which would make every trajectory indistinguishable, while still offering finer grained privacy in the sense that an adversary will only be able to tell that the commuter began north or south of the city center.

We show that profile-based privacy satisfies some of the beneficial properties of differential privacy, such as post-processing invariance and certain forms of graceful composition. We provide new mechanisms in this definition that offer better utility than local differential privacy, and conclude with theoretical as well as empirical evidence of their effectiveness.

\section{Preliminary: Privacy definitions}

We begin with defining local differential privacy -- a prior privacy framework that is related to our definition. 

\begin{Defn}\label{def:dp}
A randomized mechanism $\mathcal{A} : \mathcal{X} \rightarrow \mathcal{Y}$ achieves $\epsilon$-local differential privacy if for every pair $(X,X')$ of individuals' private records in $\mathcal{X}$ and for all outputs $y \in \mathcal{Y}$ we have:

\begin{equation}
\Pr(\mathcal{A}(X) = y) \leq e^{\epsilon} \Pr(\mathcal{A}(X') = y) \mbox{.} \label{privconstraint}
\end{equation}
\end{Defn}

Concretely, local differential privacy limits the ability of an adversary to increase their confidence in whether an individual's private value is $X$ versus $X'$ even with arbitrary prior knowledge. These protections are robust to any further computation performed on the mechanism output.

\section{Profile-based Privacy Definition} \label{sec:pbpdef}

Before we present the definition and discuss its implications, it is helpful to have a specific problem in mind. We present one possible setting in which our profiles have a clear interpretation.

\subsection{Example: Resource Usage Problem Setting}


 Imagine a shared workstation with access to several resources, such as network bandwidth, specialized hardware, or electricity usage. Different users might use this workstation, coming from a diverse pool of job titles and roles. An analyst wishes to collect and analyze the metrics of resource usage, but also wishes to respect the privacy of the workstation users. With local differential privacy, any two distinct measurements must be rendered indistinguishable. Under our alternative profile-based framework, a choice exists to protect user identities instead of measurement values. This shifts the goal away from hiding all features of the resource usages, and permits measurements to be released more faithfully when not indicative of a user's identity.

    
\subsection{Definition of Profile-based Differential Privacy}

Our privacy definition revolves around a notion of profiles, which represent distinct potential data-generating distributions. To preserve privacy, the mechanism's release must not give too much of an advantage in guessing the release's underlying profile. However, other facets of the observed data can (and should) be preserved, permitting greater utility than local differential privacy.

\begin{Defn}
Given a graph $G = (\mathcal{P},E)$ consisting of a collection $\mathcal{P}$ of data-generating distributions ("profiles") over the space $\mathcal{X}$ and collection of edges $E$, a randomized mechanism $\mathcal{A} : \mathcal{X} \times \mathcal{P} \rightarrow \mathcal{Y}$ achieves $(G,\epsilon)$-profile-based differential privacy if for every edge $e \in E$ connecting profiles $P_i$ and $P_j$, with random variables $X_i \sim P_i$ and $X_j \sim P_j$, and for all outputs $y \in \mathcal{Y}$ we have:

\begin{equation}
\frac{\Pr(\mathcal{A}(X_i,P_i) = y )}{\Pr(\mathcal{A}(X_j,P_j) = y)} \leq e^\epsilon\mbox{.} \label{profprivconstraint}
\end{equation}

\end{Defn}

 Inherent in this definition is an assumption on adversarial prior knowledge: the adversary knows each profile distribution, but has no further auxiliary information about the observed data $X$. The protected secrets are the identities of the source distributions, and are not directly related to particular features of the data $X$. Put another way, the adversarial goal here is to distinguish $P_i$ versus $P_j$, rather than any fixed $X$ versus $X'$ pair in local differential privacy. These additional assumptions in the problem setting, however, permit better performance. By not attempting to completely privatize the raw observations, information that is less relevant for guessing the sensitive profile identity can be preserved for downstream tasks.
 
 The flexible specification of sensitive pairs via edges in the graph permits privacy design decisions that also impact the privacy-utility trade-off. When particular profile pairs are declared less sensitive, the perturbations required to blur those profiles can be avoided. Such decisions would be impractical in the data-oriented local differential privacy setting, where the space of pairs of data sets is intractably large.
 
 \subsection{Related Work and Related Privacy Definitions}

Our proposed definition is related to two commonly used privacy frameworks: the generalized Pufferfish privacy framework~\cite{KM12}, and geoindistinguishability~\cite{geoind13}. Like our definition, Pufferfish presents an explicit separation of sensitive and insensitive information with distributional assumptions. However, we focus on a local case with distributional secrets, while the existing Pufferfish literature targets value-dependent secrets in a global setting. Our definition is also similar to geoindistinguishability, but our work does not require an explicit metric and applies more readily to a discrete setting. This view can also be extended to limited precision local privacy, a definition appearing in independent concurrent work on privatized posterior sampling~\cite{schein2018locally}.

Another closely related concurrent independent work is DistP~\cite{kawamoto2018differentially}. This work also defines a privacy notion over a set of adjacencies across data distributions. They focus on continuous settings, lifting the Laplace mechanism into an $(\epsilon,\delta)$-style privacy guarantee over distributions. Additionally, they explore extended privacy frameworks that intorduce metrics into the guarantees, much like geoindistinguishability. Our work is mainly complementary to \cite{kawamoto2018differentially}. We focus instead on discrete settings where the specific structure of our profile adjacency graph can be exploited, as well as finding mechanisms that achieve privacy guarantees without explicit metrics or additive $\delta$ terms.

Our methods also resemble those seen under maximal-leakage-constrained hypothesis testing~\cite{liao2017hypothesis}. The maximal-leakage framework also employs a distribution-focused mechanism design, but solves a different problem. Our work aims to prevent identifying distributions while preserving and identifying observed values where possible. The maximal-leakage setting inverts this goal, and protects the observed values while maximizing the detection of hypotheses on the distributions. This distinction in goal also applies with respect to the distributional privacy framework~\cite{balcan2012distributed} and in the explorations of \cite{canonne2018structure} into general privatized hypothesis testing. 

Finally, our work can also be viewed in relation to information theoretic definitions dues to the deep connections present from differential privacy~\cite{wang2016relation}. However, like with other differential privacy variants, the worst-case protections of the sensitive attributes makes our notion of privacy distinct from many common averaged measures of information.

\subsection{Discussion of the Resource Usage Problem}

This privacy framework is quite general, and as such it helps to discuss its meaning in more concrete terms. Let us return to the resource usage setting. We'll assume that each user has a personal resource usage profile known prior to the data collection process. The choice of edges in the graph $G$ has several implications. If the graph has many edges, the broad identity of the workstation user will be hidden by forbidding many potential inferences. However, this protection does not require all the information about resource usage to be obscured. For example, if all users require roughly the same amount of electricity at the workstation, then electrical usage metrics will not require much obfuscation even with a fully connected graph.

A more sparse graph might only connect profiles with the same job title or role. These sensitive pairs will prevent inferences about particular identities within each role. However, without connections across job titles, no protection is enforced against inferring the job title of the current workstation user. Thus such a graph declares user identities sensitive, while a user's role is not sensitive. When users in the same role have similar usages, this sparser graph will require less perturbations of the data.

One important caveat of this definition is that the profile distributions must be known and are assumed to be released a priori, i.e. they are not considered privacy sensitive. If the user profiles cannot all be released, this can be mitigated somewhat by reducing the granularity of the graph. A graph consisting only of profiles for each distinct job role can still encode meaningful protections, since limiting inferences on job role can also limit inferences on highly correlated information like the user's identity.

The trade-off in profile granularity is subtle, and is left for future exploration. More profiles permit more structure and opportunities for our definition to achieve better utility than local differential privacy, but also require a greater level of a priori knowledge.

\section{Properties}

Our privacy definition enjoys several similar properties to other differential-privacy-inspired frameworks. The post-processing and composition properties are recognized as highly desired traits for privacy definitions~\cite{KM12}.

\paragraph{Post-Processing}
The post-processing property specifies that any additional computation (without access to the private information) on the released output cannot result in worse privacy. Following standard techniques, our definition also shares this data processing inequality.

\begin{Obs}
If a data sample $X_i$ is drawn from profile $P_i$, and $\mathcal{A}$ preserves $(G,\epsilon)$-profile-based privacy, then for any (potentially randomized) function $F$, the release $F(\mathcal{A}(X_i,P_i))$ preserves $(G,\epsilon)$-profile-based privacy.
\end{Obs}

\paragraph{Composition} The composition property allows for multiple privatized releases to still offer privacy even when witnessed together. Our definition also gets a compositional property, although not all possible compositional settings behave nicely. We mitigate the need for composition by focusing on a local model where the data mainly undergoes one privatization. 

Profile-based differential privacy enjoys additive composition if truly independent samples $X$ are drawn from the same profile. The proof of this follows the same reasoning as the additive composition of differential privacy.

\begin{Obs}
If two independent samples $X_1$ and $X_2$ are drawn from profile $P_i$, and $\mathcal{A}_1$ preserves $(G,\epsilon_1)$-profile-based privacy and $\mathcal{A}_2$ preserves $(G,\epsilon_2)$-profile-based privacy, then the combined release $(\mathcal{A}_1(X_1,P_i) , \mathcal{A}_2(X_2,P_i))$ preserves $(G, \epsilon_1 + \epsilon_2)$-profile-based privacy.

\end{Obs}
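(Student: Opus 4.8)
The plan is to reduce the joint statement to the two single-mechanism guarantees by a product-measure factorization, mirroring the standard argument for additive composition of differential privacy. Fix an edge $e \in E$ joining profiles $P_i$ and $P_j$, and fix an output pair $(y_1,y_2) \in \mathcal{Y}\times\mathcal{Y}$. The comparison demanded by the definition is between the world where both mechanisms are run with source $P_i$ (so $X_1,X_2 \sim P_i$) and the world where both are run with source $P_j$ (so $X_1,X_2 \sim P_j$); the quantity to bound is the ratio of $\Pr[\mathcal{A}_1(X_1,P_i)=y_1 \wedge \mathcal{A}_2(X_2,P_i)=y_2]$ to the same expression with $P_j$ in place of $P_i$.

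First I would argue that the numerator factors as $\Pr[\mathcal{A}_1(X_1,P_i)=y_1]\cdot\Pr[\mathcal{A}_2(X_2,P_i)=y_2]$, where each factor is a probability over the draw of the sample and the internal randomness of that mechanism. This uses (a) $X_1 \perp X_2$ by hypothesis, and (b) that $\mathcal{A}_1$ and $\mathcal{A}_2$ use independent internal coin tosses (independent also of the samples); hence the pair $(\mathcal{A}_1(X_1,P_i),\mathcal{A}_2(X_2,P_i))$ is a product of two independent random variables. The identical factorization applies to the denominator with $P_j$.

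Next I would apply the per-mechanism edgewise guarantees. Since $\mathcal{A}_1$ is $(G,\epsilon_1)$-profile-based private and $e \in E$, we have $\Pr[\mathcal{A}_1(X_1,P_i)=y_1] \le e^{\epsilon_1}\Pr[\mathcal{A}_1(X_1,P_j)=y_1]$ (with $X_1 \sim P_j$ on the right), and likewise $\Pr[\mathcal{A}_2(X_2,P_i)=y_2] \le e^{\epsilon_2}\Pr[\mathcal{A}_2(X_2,P_j)=y_2]$. Multiplying these two inequalities and recombining the right-hand side via the factorization of the denominator gives the joint ratio bound $e^{\epsilon_1+\epsilon_2}$. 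As $(y_1,y_2)$ and $e$ were arbitrary, this is exactly $(G,\epsilon_1+\epsilon_2)$-profile-based privacy for the combined release. If $\mathcal{Y}$ is not discrete one replaces point outputs by measurable sets and integrates, with the product structure preserved by Fubini.

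The only place requiring genuine care is the factorization step and the assumptions it quietly rests on. The hypothesis of \emph{truly independent} samples is essential: if $X_1$ and $X_2$ were the same draw, or merely correlated, the joint law would not factor and the ratio could exceed $e^{\epsilon_1+\epsilon_2}$; the two mechanisms must also not share internal randomness. I would additionally flag explicitly that the adversary's two hypotheses here are ``both releases generated under $P_i$'' versus ``both under $P_j$'': mixed hypotheses, where one release comes from $P_i$ and the other from $P_j$, are \emph{not} covered, which is precisely the sense in which not every compositional setting behaves nicely as remarked before the statement.
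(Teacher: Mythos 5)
Your proposal is correct and follows essentially the same route as the paper's proof: both arguments exploit the independence of $X_1$ and $X_2$ (and of the mechanisms' internal randomness) to factor the joint output distribution into a product of the two marginals, then apply each mechanism's edgewise guarantee and multiply to obtain the $e^{\epsilon_1+\epsilon_2}$ bound. Your explicit remarks on the failure modes (shared or correlated samples, shared randomness, mixed hypotheses) match the caveats the paper records immediately after its proof.
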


A notion of parallel composition can also be applied if two data sets come from two independent processes of selecting a profile. In this setting, information about one instance has no bearing on the other. This matches the parallel composition of differential privacy when applied to multiple independent individuals, and would be the analagous setting to local differential privacy where each individual applies their own mechanism. 

\begin{Obs}\label{obs:parallelcomp}
If two profiles $P_i$ and $P_j$ are independently selected, and two observations $X_i \sim P_i$ and $X_j \sim P_j$ are drawn,  and $\mathcal{A}_1$ preserves $(G,\epsilon_1)$-profile-based privacy and $\mathcal{A}_2$ preserves $(G,\epsilon_2)$-profile-based privacy, then the combined release $(\mathcal{A}_1(X_i,P_i), \mathcal{A}_2(X_j,P_j))$ preserves $(G, \text{max}\{\epsilon_1,\epsilon_2\})$-profile-based privacy.

\end{Obs}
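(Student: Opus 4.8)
The plan is to reduce the parallel-composition claim to a pointwise bound on the ratio of joint output probabilities, exploiting the independence of the two profile-selection processes. Fix an edge $e \in E$ connecting profiles $P_k$ and $P_\ell$; we must show that swapping the endpoints of this edge in either coordinate changes the joint density by at most a factor $e^{\max\{\epsilon_1,\epsilon_2\}}$. The key observation is that because $P_i$ and $P_j$ are selected independently and the two mechanisms act on disjoint inputs, the joint output density factorizes: $\Pr((\mathcal{A}_1(X_i,P_i),\mathcal{A}_2(X_j,P_j)) = (y_1,y_2)) = \Pr(\mathcal{A}_1(X_i,P_i)=y_1)\cdot\Pr(\mathcal{A}_2(X_j,P_j)=y_2)$, where $X_i \sim P_i$ and $X_j \sim P_j$.

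First I would carefully state what ``adjacency'' means in the joint profile space: two configurations of the pair of profiles are adjacent when they differ in exactly one coordinate, and that differing pair is an edge of $G$. (If we allowed both coordinates to change simultaneously along two edges, we would get $\epsilon_1 + \epsilon_2$, not the max; the observation as stated implicitly uses the single-coordinate notion, matching local DP's parallel composition over distinct individuals.) Then I would take such an adjacent pair --- say the first coordinate changes from $P_k$ to $P_\ell$ along edge $e$, while the second coordinate is some fixed profile $P_m$ --- and write the ratio of joint densities. Using the factorization, the $\mathcal{A}_2$ factor is identical in numerator and denominator and cancels, leaving exactly $\Pr(\mathcal{A}_1(X_k,P_k)=y_1)/\Pr(\mathcal{A}_1(X_\ell,P_\ell)=y_1)$, which is at most $e^{\epsilon_1} \le e^{\max\{\epsilon_1,\epsilon_2\}}$ by the $(G,\epsilon_1)$-privacy of $\mathcal{A}_1$ applied to edge $e$. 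The symmetric case, where the second coordinate changes along an edge, cancels the $\mathcal{A}_1$ factor and invokes $(G,\epsilon_2)$-privacy of $\mathcal{A}_2$, giving the bound $e^{\epsilon_2} \le e^{\max\{\epsilon_1,\epsilon_2\}}$. Combining the two cases yields the claim, and the symmetric (reciprocal) direction follows identically by exchanging numerator and denominator.

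The step I expect to be the main obstacle --- really the only subtle point --- is making precise the probability model underlying the joint release so that the factorization is rigorous, particularly when $\mathcal{Y}$ is continuous and one must argue about densities rather than point masses, and when $X_i$ and $X_j$ are themselves random. The cleanest route is to condition on the realized profiles and note that, given the profiles, $(X_i,\mathcal{A}_1)$ and $(X_j,\mathcal{A}_2)$ use independent randomness, so the conditional joint law is a product measure; the privacy constraint is a statement about these conditional laws for each fixed edge, so independence of the profile-selection step is exactly what lets us hold the ``other'' coordinate fixed while varying one endpoint along an edge. Everything else is the same bookkeeping as the proof of Observation~\ref{obs:parallelcomp}'s differential-privacy analogue, so I would keep the write-up short and defer to ``the same reasoning as the parallel composition of differential privacy.''
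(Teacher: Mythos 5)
Your proposal is correct and follows essentially the same route as the paper: use independence of the two selections to make the other release's contribution cancel, reduce the joint ratio to a single mechanism's ratio along the edge, and take the maximum of $e^{\epsilon_1}$ and $e^{\epsilon_2}$. The only cosmetic difference is that the paper marginalizes over the second (random) profile rather than conditioning on a fixed value of it, but by independence these give the same bound.
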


The parallel composition result assumes that the choice of $\mathcal{A}_2$ does depend on the first release, or in other words that it is non-adaptive. It should also be noted that the privacy guarantee is about how much protection a single profile graph edge receives in just one profile selection process. With two releases, clearly more information is being released, but the key idea in this result is that the information released in one round has no impact on the secret profile identity of the other round.

However, this framework cannot offer meaningful protections against adversaries that know about correlations in the profile selection process. For example, consider an adversary with knowledge that profile $P_k$ is always selected immediately after either $P_i$ or $P_j$ are selected. An edge obscuring $P_i$ versus $P_j$ will not prevent the adversary from deducing $P_k$ in the next round. This matches the failure of differential privacy to handle correlations across individuals. The definition also does not compose if the same observation $X$ is reprocessed, as it adds correlations unaccounted for in the privacy analysis. Although such compositions would be valuable, it is less important when the privatization occurs locally at the time of data collection.


Placing these results in the context of reporting resource usage, we can bound the total privacy loss across multiple releases in two cases. Additive composition applies if a single user emits multiple independent measurements and each measurement is separately privatized. When two users independently release measurements, each has no bearing on the other and parallel composition applies. If correlations exist across measurements (or across the selection of users), no compositional result is provided.

\begin{figure}
\centering
\includegraphics[width=5in]{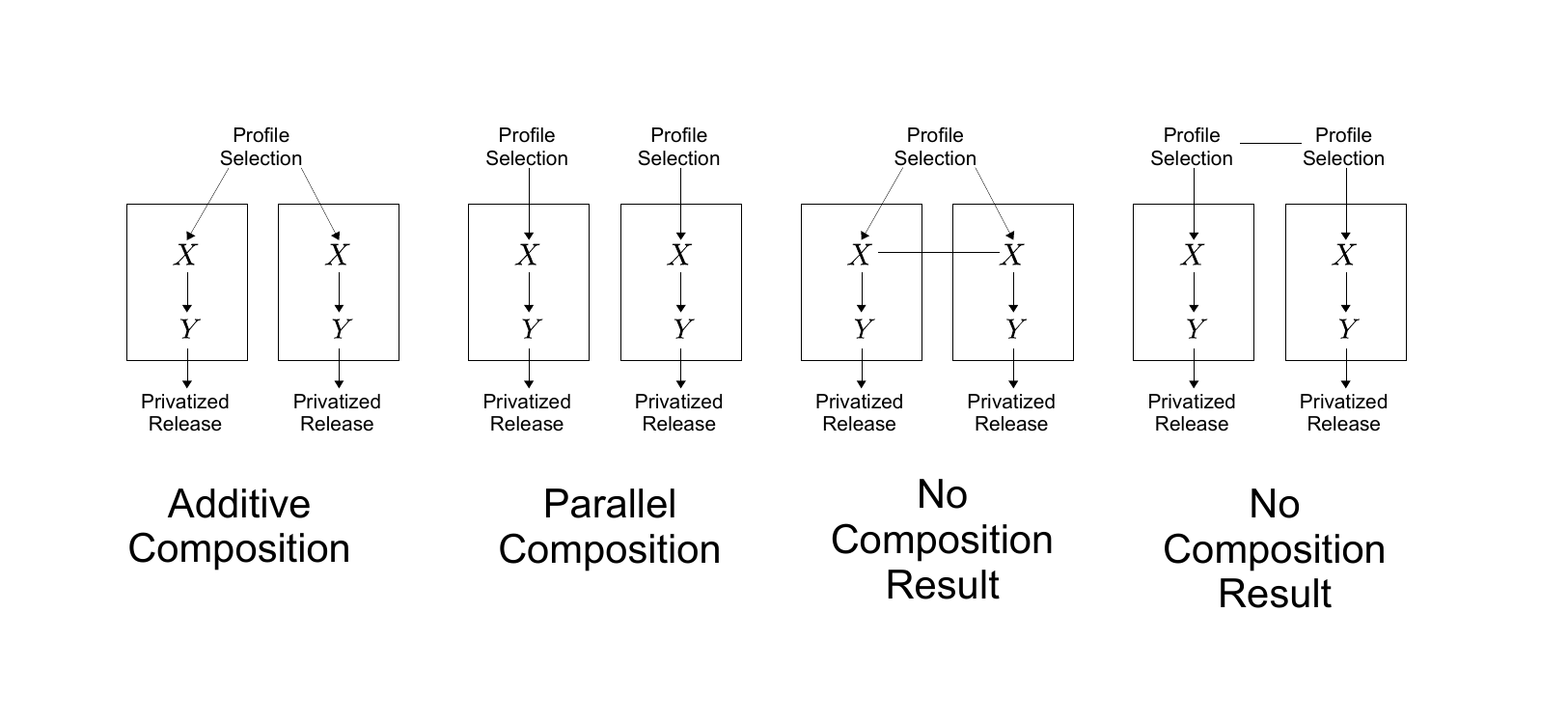}
\caption{Summary of Composition Results. From left to right: independent observation samples with independent mechanism applications compose additively, independent profile selections compose in parallel, dependent observation samples from the same profile do not compose nicely, and dependent profile selections do not compose nicely.}
\end{figure}

\section{Mechanisms} 

We now provide mechanisms to implement the profile-based privacy definition. Before getting into specifics, let us first consider the kind of utility goals that we can hope to achieve. We have two primary aspects of the graph $G$ we wish to exploit. First, we wish to preserve any information in the input that does not significantly indicate profile identities. Second, we wish to use the structure of the graph and recognize that some regions of the graph might require less perturbations than others.

\subsection{The One-Bit Setting}

We begin with a one-bit setting -- where the input to the mechanism is a single private bit -- and build up to the more general discrete setting. 


The simplest case is when we have two profiles $i$ and $j$ represented by Bernoulli distributions $P_i$ and $P_j$ with parameters $p_i$ and $p_j$ respectively. Here, we aim to design a mechanism $\mathcal{A}$ that makes a bit $b$ drawn from $P_i$ or $P_j$ indistinguishable; that is, for any $t \in \{0, 1\}$, with $b_i \sim P_i$ and $b_j \sim P_j$,
\begin{equation}
\frac{\Pr(\mathcal{A}(b_i,P_i) = t)}{\Pr(\mathcal{A}(b_j,P_j) = t )} \leq e^{\epsilon} \mbox{.}
\end{equation}

A plausible mechanism is to draw a bit $b'$ from a Bernoulli distribution that is independent of the original bit $b$. However, this is not desirable as the output bit would lose any correlation with the input, and any and all information in the bit $b$ would be discarded.

We instead use a mechanism that flips the input bit with some probability $\alpha \leq 1/2$. Lower values of $\alpha$ improve the correlation between the output and the input. The flip-probability $\alpha$ is obtained by solving the following optimization problem:

\begin{eqnarray}
\min && \alpha  \label{eq:onebitprob} \\
{\text{subject\ to }} && \alpha \geq 0 \nonumber\\
&&\frac{p_i (1 - \alpha) + (1 - p_i) \alpha}{p_j (1 - \alpha) + (1 - p_j) \alpha} \in [e^{-\epsilon}, e^{\epsilon}] \nonumber\\
&&\frac{(1 - p_i) (1 - \alpha) + p_i \alpha }{(1 - p_j) (1 - \alpha) + p_j \alpha} \in [e^{-\epsilon}, e^{\epsilon}] \mbox{.}\nonumber
\end{eqnarray}

When $p_i = 0$ and $p_j = 1$ (or vice versa), this reduces to the standard randomized response mechanism~\cite{W65}; however, $\alpha$ may be lower if $p_i$ and $p_j$ are closer -- a situation where our utility is better than local differential privacy's.

\begin{algorithm} 
\caption{Single-bit Two-profile Mechanism}\label{alg:onebittwoprof}
\begin{algorithmic}[1]
\REQUIRE Two Bernoulli profiles parameterized by $p_1$ and $p_2$, privacy level $\epsilon$, input profile $P_i$, input bit $x$.
\STATE Solve the linearly constrained optimization \eqref{eq:onebitprob} to get a flipping probability $\alpha$.
\STATE With probability $\alpha$, return the negation of $x$. Otherwise, return $x$.
\end{algorithmic}
\end{algorithm}

The mechanism described above only addresses two profiles. If we have a cluster of profiles representing a connected component of the profile graph, we can compute the necessary flipping probabilities across all edges in the cluster. To satisfy all the privacy constraints, it suffices to always use a flipping probability equal to the largest value required by an edge in the cluster. This results in a naive method we will call the One Bit Cluster mechanism, directly achieves profile-based privacy. 

\begin{algorithm} 
\caption{One Bit Cluster Mechanism}\label{alg:onebit}
\begin{algorithmic}[1]
\REQUIRE Graph $(\mathcal{P},E)$ of Bernoulli profiles, privacy level $\epsilon$, input profile $P_i$, input bit $x$.
\FOR{each edge $e$ in the connected component of the graph containing $P_i$}
\STATE Solve the linearly constrained optimization \eqref{eq:onebitprob} to get a flipping probability $\alpha_e$ for this edge.
\ENDFOR
\STATE Compute $\alpha = \max_{e} \alpha_e$.
\STATE With probability $\alpha$, return the negation of $x$. Otherwise, return $x$.
\end{algorithmic}
\end{algorithm}

\begin{theorem}
The One Bit Cluster mechanism achieves $(G,\epsilon)$-profile-based privacy.
\end{theorem}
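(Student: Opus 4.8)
The plan is to reduce the statement to the per-edge guarantee that is built into the optimization \eqref{eq:onebitprob}, combined with post-processing invariance. I would first record two elementary facts about bit-flipping. \textbf{(i)} For a Bernoulli parameter $p$, flipping the input with probability $\alpha$ outputs $1$ with probability $p + (1-2p)\alpha$; since $\alpha = 1/2$ makes both ratios in \eqref{eq:onebitprob} equal to $1$, the optimization is always feasible and its optimum $\alpha_e$ lies in $[0,1/2]$. \textbf{(ii)} Flipping with probability $\alpha_1$ and then independently flipping with probability $\alpha_2$ is the same as a single flip with probability $\alpha_1 + \alpha_2 - 2\alpha_1\alpha_2$; as $\alpha_2$ ranges over $[0,1/2]$ this composite value sweeps exactly $[\alpha_1,1/2]$, so any flip probability in $[\alpha_e, 1/2]$ can be written as ``flip with probability $\alpha_e$, then post-process by an extra independent flip''.

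Next I would fix an arbitrary edge $e \in E$ joining $P_k$ and $P_\ell$ and verify \eqref{profprivconstraint} for it. Since $e$ is an edge, $P_k$ and $P_\ell$ lie in a common connected component $C$, so Algorithm~\ref{alg:onebit} uses the \emph{same} flip probability $\alpha := \max_{e' \in C} \alpha_{e'}$ in both $\mathcal{A}(b_k, P_k)$ and $\mathcal{A}(b_\ell, P_\ell)$; moreover $\alpha_e \le \alpha \le 1/2$ because $e \in C$ and each $\alpha_{e'} \le 1/2$ by fact (i). By fact (ii), the cluster mechanism on profiles of $C$ equals Algorithm~\ref{alg:onebittwoprof} run at flip probability $\alpha_e$ (call it $\mathcal{A}_e$), post-processed by an independent flip with probability $(\alpha - \alpha_e)/(1 - 2\alpha_e) \in [0,1/2]$ (no extra flip when $\alpha_e = 1/2$). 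The mechanism $\mathcal{A}_e$ satisfies $\Pr(\mathcal{A}_e(b_k,P_k)=t) \le e^{\epsilon}\Pr(\mathcal{A}_e(b_\ell,P_\ell)=t)$ for $t \in \{0,1\}$ --- this is exactly the pair of constraints of \eqref{eq:onebitprob} that defines $\alpha_e$. An extra flip is a randomized function of the released bit alone, so by the same argument as the post-processing observation the ratio bound survives it, giving \eqref{profprivconstraint} for $e$. As $e$ was arbitrary, $\mathcal{A}$ achieves $(G,\epsilon)$-profile-based privacy.

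The only place with real content is the claim that enlarging the flip probability from the per-edge optimum $\alpha_e$ to $\max_{e' \in C}\alpha_{e'} \le 1/2$ cannot break the edge-$e$ constraint; everything else is bookkeeping about connected components. A self-contained alternative to the post-processing appeal proves this by direct monotonicity: writing $h_m(\alpha) = p_m + (1-2p_m)\alpha$, one computes $\frac{d}{d\alpha}\frac{h_k(\alpha)}{h_\ell(\alpha)} = \frac{p_\ell - p_k}{h_\ell(\alpha)^2}$ (and similarly for $\frac{1-h_k(\alpha)}{1-h_\ell(\alpha)}$), so each ratio is monotone in $\alpha$ and equals $1$ at $\alpha = 1/2$; hence both ratios, which lie in $[e^{-\epsilon},e^{\epsilon}]$ at $\alpha = \alpha_e$, stay in that interval throughout $[\alpha_e, 1/2]$. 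I would expect the post-processing version to be the cleaner write-up.
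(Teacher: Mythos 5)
Your proposal is correct, but its main route is genuinely different from the paper's. The paper proves the crucial fact --- that enlarging the flip probability from the per-edge optimum $\alpha_e$ to $\max_{e'}\alpha_{e'} \le 1/2$ cannot violate the edge-$e$ constraints --- by direct calculus: it differentiates the log-ratios in $\alpha$, finds the derivative proportional to $p_2 - p_1$ (exactly the quantity $\tfrac{p_\ell - p_k}{h_\ell(\alpha)^2}$ in your ``self-contained alternative,'' which is therefore essentially the paper's own proof), and concludes the ratios move monotonically toward the value $1$ attained at $\alpha = 1/2$. Your primary argument instead factors the flip at any $\alpha \in [\alpha_e, 1/2]$ as the flip at $\alpha_e$ followed by an independent flip with probability $(\alpha - \alpha_e)/(1-2\alpha_e)$, and then appeals to post-processing invariance; the supporting facts you use are all sound: feasibility of $\alpha = 1/2$ forces $\alpha_e \le 1/2$, the composition law $\alpha_1 + \alpha_2 - 2\alpha_1\alpha_2$ sweeps $[\alpha_1, 1/2]$, the optimization constraints in \eqref{eq:onebitprob} are precisely the edge-$e$ guarantee for $\mathcal{A}_e$, and both endpoints of an edge lie in the same connected component, so Algorithm~\ref{alg:onebit} applies the identical (and identical post-processing) flip probability to both profiles. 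What your factorization buys is a calculus-free, structural explanation of the monotonicity --- a larger symmetric flip is a garbling of a smaller one --- reusing the paper's own post-processing observation; what the paper's derivative computation buys is a self-contained argument whose monotonicity statement is also what the proof of Observation~\ref{thm:utility} later invokes when solving the constraints at equality. Either write-up is complete; your post-processing version is arguably the cleaner of the two.
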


The One Bit Cluster mechanism has two limitations. First, it applies only to single bit settings and Bernoulli profiles, and not categorical distributions. Second, by treating all pairs of path-connected profiles similarly, it is overly conservative; when profiles are distant in the graph from a costly edge, it is generally possible to provide privacy with lesser perturbations for these distant profiles.

We address the second drawback while remaining in the one bit setting with the Smooth One Bit mechanism, which uses ideas inspired by the smoothed sensitivity mechanism in differential privacy~\cite{NRS07}. However, rather than smoothly calibrating the perturbations across the entire space of data sets, a profile-based privacy mechanism needs only to smoothly calibrate over the specified profile graph. This presents a far more tractable task than smoothly handling all possible data sets in differential privacy.

This involves additional optimization variables, $\alpha_1,\ldots,\alpha_k$, for each of the $k$ profiles in $G$. Thus each profile is permitted its own chance of inverting the released bit.  Here, $p_i$ once again refers to the parameter of the Bernoulli distribution $P_i$. We select our objective function as $\max(\alpha_1,\ldots,\alpha_k)$ in order to uniformly bound the mechanism's chances of inverting or corrupting the input bit. This task remains convex as before, and is still tractably optimized.

\begin{eqnarray}
\min_{\alpha_1,\ldots,\alpha_k} && \max(\alpha_1,\ldots,\alpha_k) \label{eq:onebitsmooth} \\
{\text{subject\ to }} && \forall i \in [k]\text{: }\alpha_i \geq 0 \nonumber\\
&&\frac{p_i (1 - \alpha_i) + (1 - p_i) \alpha_i}{p_j (1 - \alpha_j) + (1 - p_j) \alpha_j} \in [e^{-\epsilon}, e^{\epsilon}] \nonumber\\
&&\frac{(1 - p_i) (1 - \alpha_i) + p_i \alpha_i }{(1 - p_j) (1 - \alpha_j) + p_j \alpha_j} \in [e^{-\epsilon}, e^{\epsilon}] \mbox{.}\nonumber
\end{eqnarray}

\begin{algorithm} 
\caption{Smooth One Bit Mechanism}\label{alg:smooth1bit}
\begin{algorithmic}[1]
\REQUIRE Graph $(\mathcal{P},E)$ of $k$ Bernoulli profiles, privacy level $\epsilon$, input profile $P_i$, input bit $x$.
\STATE Solve the linearly constrained optimization \eqref{eq:onebitsmooth} to get flipping probabilities $\alpha_1,\ldots,\alpha_k$.
\STATE With probability $\alpha_i$, return the negation of $x$. Otherwise, return $x$.
\end{algorithmic}
\end{algorithm}

\begin{theorem}
The Smooth One Bit mechanism achieves $(G,\epsilon)$-profile-based privacy.
\end{theorem}



\subsection{The Categorical Setting}

We now show how to generalize this model into the categorical setting. This involves additional constraints, as well as a (possibly) domain specific objective that maximizes some measure of fidelity between the input and the output.

Specifically, suppose we have $k$ categorical profiles each with $d$ categories; we introduce $kd^2$ variables to optimize, with each profile receiving a $d \times d$ transition matrix. To keep track of these variables, we introduce the following notation:

\begin{itemize}
\item $P_i,\ldots,P_k$: a set of $k$ categorical profiles in $d$ dimensions encoded as a vector.
\item $A^{1},\ldots,A^k$: A set of  $d$-by-$d$ transition matrix that represents the mechanism's release probabilities for profile $i$. $A^i_{j,k}$ represents the $(j,k)$-th element of the matrix $A^i$.
\item $P_iA^i$ represents the $d$ dimensional categorical distribution induced by the transition matrix $A^i$ applied to the distribution $P_i$.
\item In an abuse of notation, $P_iA^i \leq e^{\epsilon} P_jA^j$ is a constraint that applies element-wise to all components of the resulting vectors on each side.
\end{itemize}

With this notation, we can express our optimization task:

\begin{align}
\min_{A^{1},\ldots,A^k}  &\text{max}(\text{off-diagonal entries of } A^1,\ldots,A^k) \label{eq:smoothobj}\\
\text{subject to }   &\forall i \in [n] \forall j \in [d] \forall k \in [d] \text{: } \hskip10pt  0 \leq A^i_{j,k} \leq 1 \nonumber\\
 &\forall i \in [n] \forall j \in [d] \text{: } \hskip10pt  \sum_{k=1}^d A^i_{j,k} = 1 \nonumber\\
  &\forall (P_i,P_j) \in E \text{: } P_iA^i \leq e^{\epsilon} P_jA^j, \hskip10pt   P_jA^j \leq e^{\epsilon} P_iA^i \mbox{.}\nonumber
\end{align}

\begin{algorithm} 
\caption{Smooth Categorical Mechanism}\label{alg:smoothcat}
\begin{algorithmic}[1]
\REQUIRE Graph $(\mathcal{P},E)$ of Categorical profiles, privacy level $\epsilon$, input profile $P_i$, input $x$.
\STATE Solve the linearly constrained optimization \eqref{eq:smoothobj} to get the transition matrices $A^{1},\ldots,A^k$.
\STATE Return $y$ sampled according to the $x$th row of $A^i$.
\end{algorithmic}
\end{algorithm}

To address the tractability of the optimization, we note that each of the privacy constraints are linear constraints over our optimization variables. We further know the feasible solution set is nonempty, as trivial non-informative mechanisms achieve privacy. All that is left is to choose a suitable objective function to make this a readily solved convex problem.

To settle onto an objective will require some domain-specific knowledge of the trade-offs between choosing which profiles and which categories to report more faithfully. Our general choice is a maximum across the off-diagonal elements, which attempts to uniformly minimize the probability of any data corruptions. This can be further refined in the presence of a prior distribution over profiles, to give more importance to the profiles more likely to be used.

We define the Smooth Categorical mechanism as the process that solves the optimization~\eqref{eq:smoothobj} and applies the appropriate transition probabilities on the observed input.

\begin{theorem}
The Smooth Categorical mechanism achieves $(G,\epsilon)$-profile-based privacy.
\end{theorem}

\subsection{Utility Results}

 The following results present utility bounds which illustrate potential improvements upon local differential privacy; a more detailed numerical simulation is presented in Section~\ref{sec:experiments}.
 
 \begin{theorem}
 If $\mathcal{A}$ is a mechanism that preserves $\epsilon$-local differential privacy, then for any graph $G$ of sensitive profiles, $\mathcal{A}$ also preserves $(G,\epsilon)$-profile-based differential privacy.
 \label{thm:noworse}
 \end{theorem}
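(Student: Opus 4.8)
The plan is to show that the local differential privacy constraint in Definition~\ref{def:dp} is strictly stronger than the profile-based constraint \eqref{profprivconstraint}, by a straightforward averaging (marginalization) argument. The key observation is that $\Pr(\mathcal{A}(X_i, P_i) = y)$ is a mixture of the quantities $\Pr(\mathcal{A}(x) = y)$ weighted by the probability mass $P_i$ places on each $x \in \mathcal{X}$. Here I would note that an $\epsilon$-locally-private $\mathcal{A}$ ignores its second argument (it is a map $\mathcal{X} \to \mathcal{Y}$), so the profile label carries no information and only the draw $X_i \sim P_i$ matters.

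First I would fix an arbitrary edge $e \in E$ connecting profiles $P_i$ and $P_j$, and an arbitrary output $y \in \mathcal{Y}$. Writing the marginal explicitly, $\Pr(\mathcal{A}(X_i, P_i) = y) = \sum_{x \in \mathcal{X}} P_i(x)\, \Pr(\mathcal{A}(x) = y)$, and similarly for $j$. Second, from the $\epsilon$-local-DP guarantee, for every pair $x, x' \in \mathcal{X}$ we have $\Pr(\mathcal{A}(x) = y) \le e^{\epsilon}\, \Pr(\mathcal{A}(x') = y)$; in particular each term $\Pr(\mathcal{A}(x) = y)$ lies within a multiplicative $e^{\epsilon}$ factor of any fixed reference value. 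Third, I would combine these: since a convex combination of numbers each lying in an interval $[c, e^{\epsilon} c]$ again lies in that interval, both $\Pr(\mathcal{A}(X_i,P_i)=y)$ and $\Pr(\mathcal{A}(X_j,P_j)=y)$ lie within a factor $e^{\epsilon}$ of, say, $\min_{x}\Pr(\mathcal{A}(x)=y)$, and therefore within a factor $e^{\epsilon}$ of each other. More directly, one can bound termwise: $\sum_x P_i(x)\Pr(\mathcal{A}(x)=y) \le \sum_x P_i(x)\, e^{\epsilon}\Pr(\mathcal{A}(x^*)=y) = e^{\epsilon}\Pr(\mathcal{A}(x^*)=y) \le e^{\epsilon}\sum_x P_j(x)\Pr(\mathcal{A}(x)=y)$ for a suitable $x^*$ — I would pick $x^*$ to make the last inequality clean, or simply invoke the interval argument to avoid choosing it. This yields exactly \eqref{profprivconstraint} for this edge and output, and since $e$ and $y$ were arbitrary, $\mathcal{A}$ satisfies $(G,\epsilon)$-profile-based privacy.

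There is essentially no hard part here — the only thing to be careful about is the degenerate case where $\Pr(\mathcal{A}(x)=y)=0$ for some or all $x$, so that ratios are $0/0$; this is handled by the standard convention that $0 \le e^{\epsilon}\cdot 0$ and by noting local DP already forces $\Pr(\mathcal{A}(x)=y)$ to be simultaneously zero for all $x$ whenever it is zero for one. A second minor point is making explicit that local DP mechanisms do not use the profile argument, so that the notation $\mathcal{A}(X_i,P_i)$ is well-defined as $\mathcal{A}(X_i)$; this should be stated as a remark rather than belabored. Thus the whole argument is a one-line marginalization, and I would present it as such.
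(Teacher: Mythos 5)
Your proposal is correct and matches the paper's own argument: both marginalize over the hidden observation $X$ and bound the ratio of the two mixtures by comparing against extreme values of $\Pr(\mathcal{A}(x)=y)$ (the paper uses $\sup_x/\inf_x \le e^{\epsilon}$, which is the same interval argument you give), after noting that a locally private $\mathcal{A}$ ignores the profile argument. Your additional remarks on the $0/0$ degenerate case are a harmless refinement the paper leaves implicit.
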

 
 An immediate result of Theorem~\ref{thm:noworse} is that, in general and for any measure of utility on mechanisms, the profile-based differential privacy framework will never require worse utility than a local differential privacy approach. However, in specific cases, stronger results can be shown.

\begin{Obs} \label{thm:utility}
Suppose we are in the single-bit setting with two Bernoulli profiles $P_i$ and $P_j$ with parameters $p_i$ and $p_j$ respectively. If $p_i \leq p_j \leq e^{\epsilon} p_j$, then the solution $\alpha$ to~\eqref{eq:onebitprob} satisfies 
\begin{equation}
\alpha \leq\max\{0,\frac{p_j - e^{\epsilon}p_i}{2(p_j - e^{\epsilon}p_i) - (1-e^{\epsilon})}, \frac{p_i - e^{\epsilon} p_j + e^{\epsilon} - 1}{2(p_i - e^{\epsilon} p_j) +  e^{\epsilon} - 1}\}\mbox{.}
\end{equation}

\end{Obs}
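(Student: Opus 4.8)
The plan is to make the feasible region of the optimization~\eqref{eq:onebitprob} explicit and then exhibit a feasible $\alpha$ equal to the claimed bound. Write $q_i(\alpha) = p_i(1-\alpha) + (1-p_i)\alpha = p_i + \alpha(1-2p_i)$ for the probability the mechanism outputs $1$ under profile $P_i$, and similarly $q_j(\alpha)$; the denominators $q_j$ and $1-q_j$ stay positive for $\alpha\in[0,1/2]$, so the two ratio constraints of~\eqref{eq:onebitprob} unpack into four affine-in-$\alpha$ inequalities: $q_i\le e^\epsilon q_j$, $q_j\le e^\epsilon q_i$, $1-q_i\le e^\epsilon(1-q_j)$, and $1-q_j\le e^\epsilon(1-q_i)$.

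The first step is to use the hypothesis $p_i\le p_j$ to kill two of these. Since $q_j(\alpha)-q_i(\alpha)=(p_j-p_i)(1-2\alpha)\ge 0$ for $\alpha\in[0,1/2]$, we have $q_i\le q_j$ and $1-q_i\ge 1-q_j$, so $q_i\le e^\epsilon q_j$ and $1-q_j\le e^\epsilon(1-q_i)$ hold automatically. Only $f_1(\alpha):=q_i(\alpha)-e^{-\epsilon}q_j(\alpha)\ge 0$ and $g(\alpha):=e^\epsilon(1-q_j(\alpha))-(1-q_i(\alpha))\ge 0$ remain, and both are affine in $\alpha$.

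The second step is the computation: solving $f_1(\alpha)=0$ and $g(\alpha)=0$ (substitute the expressions for $q_i,q_j$ and multiply through by $e^\epsilon$ to clear $e^{-\epsilon}$) yields precisely the two nontrivial arguments of the maximum in the statement; call them $\alpha_1^*$ and $\alpha_2^*$. Set $M=\max\{0,\alpha_1^*,\alpha_2^*\}$. If $M>1/2$, then $\alpha=1/2$ makes every ratio equal to $1$ and is feasible, so the optimum is at most $1/2<M$. If $M\le 1/2$, I claim $\alpha=M$ is feasible: $f_1(1/2)=(1-e^{-\epsilon})/2\ge 0$ and $g(1/2)=(e^\epsilon-1)/2\ge 0$, so each of these affine functions is nonnegative at the right endpoint; if it is also nonnegative at $\alpha=0$ it is nonnegative on all of $[0,1/2]$ by affinity, and otherwise it is strictly increasing on $[0,1/2]$ with unique root $\alpha_1^*$ (resp.\ $\alpha_2^*$), hence nonnegative for every $\alpha\in[\alpha_1^*,1/2]$ (resp.\ $[\alpha_2^*,1/2]$), an interval containing $M$. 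Either way $f_1(M)\ge 0$ and $g(M)\ge 0$, so $M$ is feasible and the optimum of~\eqref{eq:onebitprob} is at most $M$, which is the claim.

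The conceptual content is light; the main thing to be careful about is the sign of the slope of $f_1$ and $g$, since the convenient description ``constraint tight at $\alpha_1^*$'' is only correct when the relevant function is increasing. The ``nonnegative at both endpoints of $[0,1/2]$ implies nonnegative throughout'' observation for affine functions is what lets the argument treat all cases uniformly, and it also explains why the result is phrased as an inequality rather than an equality: when a root $\alpha_1^*$ or $\alpha_2^*$ falls outside $[0,1/2]$ the corresponding formula overshoots the true optimum.
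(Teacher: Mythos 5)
Your proof is correct and follows essentially the same route as the paper: use $p_i \le p_j$ to discard two of the four constraints, solve the remaining two at equality to obtain the two formulas, and take the maximum together with $0$. Your version is in fact slightly more careful than the paper's, which simply asserts monotonicity of the ratios in $\alpha$, whereas you verify feasibility via affinity of $f_1,g$ and their nonnegativity at $\alpha=1/2$, handling the slope-sign and out-of-range-root cases explicitly.
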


Observe that to attain local differential privacy with parameter $\epsilon$ by a similar bit-flipping mechanism, we need a flipping probability of $\frac{1}{1 + e^{\epsilon}} = \frac{1}{1 + (1+ e^{\epsilon} - 1)}$, while we get bounds of the form $\frac{1}{1+(1+\frac{e^{\epsilon}-1}{p_j-e^{\epsilon}p_i})}$. Thus, profile based privacy does improve over local differential privacy in this simple case. The proof of Observation~\ref{thm:utility} follows from observing that this value of $\alpha$ satisfies all constraints in the optimization problem~\eqref{eq:onebitprob}.

\section{Evaluation} \label{sec:experiments}

We next evaluate our privacy mechanisms and compare them against each other and the corresponding local differential privacy alternatives. In order to understand the privacy-utility trade-off unconfounded by model specification issues, we consider synthetic data in this paper.

\subsection{Experimental Setup}

We look at three experimental settings  -- Bernoulli-Couplet, Bernoulli-Chain and Categorical-Chain-3. 

\paragraph{Settings.} In Bernoulli-Couplet, the profile graph consists of two nodes connected by a single edge $G = ( \mathcal{P} = \{a, b \}, E = \{ (a, b) \})$. Additionally, each profile is a Bernoulli distribution with a parameter $p$.

In Bernoulli-Chain, the profile graph consists of a {\em{chain}} of nodes, where successive nodes in the chain are connected by an edge. Each profile is still a Bernoulli distribution with parameter $p$. We consider two experiments in this category -- Bernoulli-Chain-6, where there are six profiles corresponding to six values of $p$ that are uniformly spaced across the interval $[0,1]$, and Bernoulli-Chain-21, where there are $21$ profiles corresponding to $p$ uniformly spaced on $[0, 1]$.

Finally, in Categorical-Chain, the profile graph comprises of three nodes connected into a chain $P_1-P_2-P_3$. Each profile however, corresponds to a 4-dimensional categorical distribution, instead of Bernoulli.

\begin{table}[htbp]
  \renewcommand{\arraystretch}{1}
  \caption{Categorical-Chain Profiles}
  \label{tab:table_example}
  \centering
  \begin{tabular}{ l | c c c c }
		$P_1$ & 0.2 & 0.3 & 0.4 & 0.1 \\
		$P_2$ & 0.3 & 0.3 & 0.3 & 0.1 \\
		$P_3$ & 0.4 & 0.4 & 0.1 & 0.1 \\
		\end{tabular}
\end{table}

\paragraph{Baselines.} For Bernoulli-Couplet and Bernoulli-Chain, we use Warner's Randomized Response mechanism~\cite{W65} as a local differentially private baseline. For Categorical-Chain, the corresponding baseline is the $K$-ary version of randomized response. 

For Bernoulli-Couplet, we use our Smooth One Bit mechanism to evaluate our framework. For Categorical-Chain, we use the Smooth Categorical mechanism. 

\subsection{Results}

Figure~\ref{subfig:experiments1} plots the flipping probability for Bernoulli-Couplet as a function of the difference between profile parameters $p$. We find that as expected, as the difference between the profile parameters grows, so does the flipping probability and hence the noise added. However, in all cases, this probability stays below the corresponding value for local differential privacy -- the horizontal black line -- thus showing that profile-based privacy is an improvement. 

Figure~\ref{fig:experiments2} plots the probability that the output is $1$ as a function of $\epsilon$ for each profile in Bernoulli-Chain-6 and Bernoulli-Chain-21. We find that as expected for low $\epsilon$, the probability that the output is $1$ is close to $1/2$ for both the local differential privacy baseline and our method, whereas for higher $\epsilon$, it is spread out more evenly, (which indicates higher correlation with the input and higher utility). Additionally, we find that our Smooth One Bit mechanism  performs better than the baseline in both cases. 

\begin{figure}[tbp]
  \centerline{\subfigure[Bernoulli-Couplet, Our Method and Baseline.]{\includegraphics[width=2in,height=1.5in]{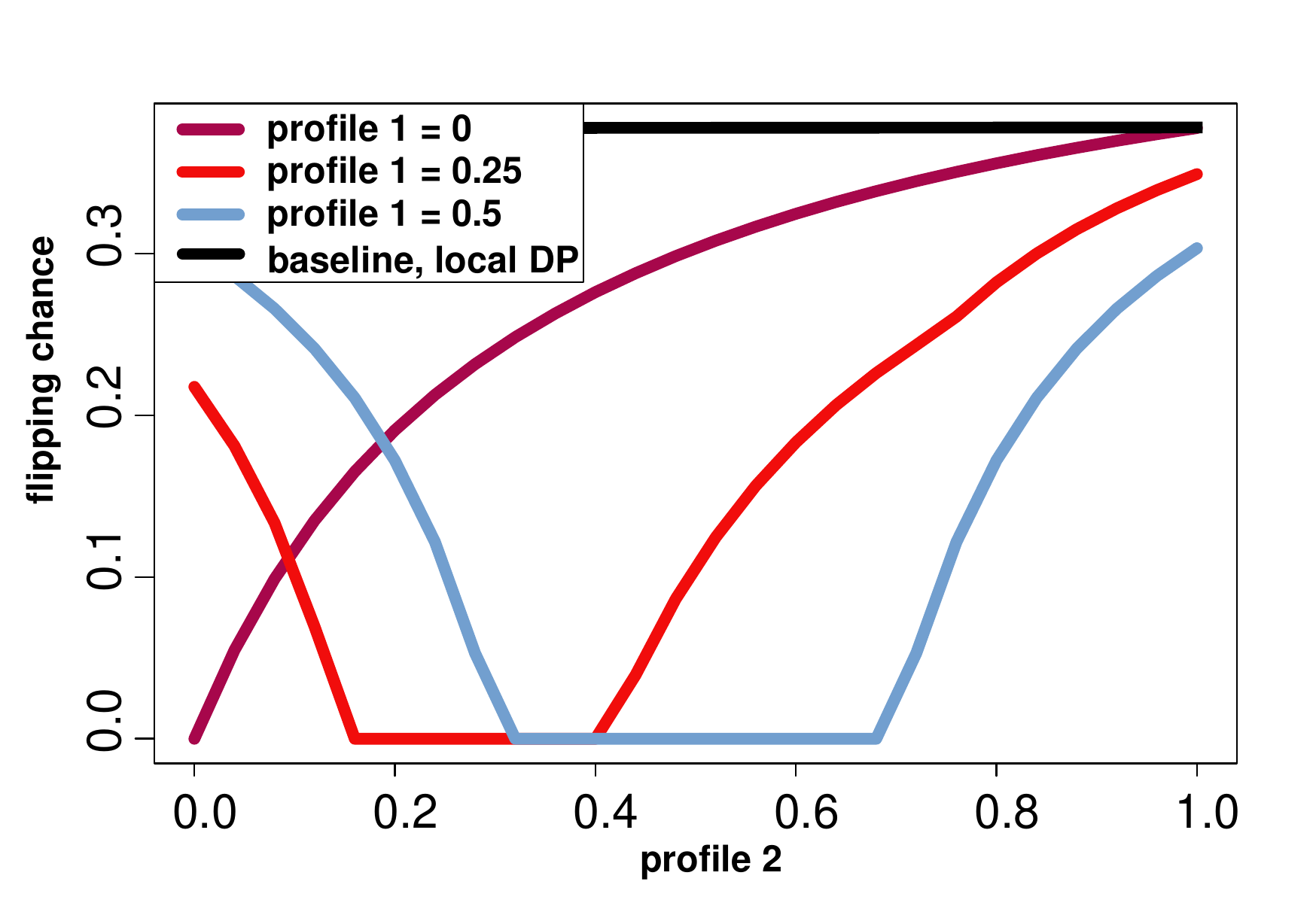}
      \label{subfig:experiments1}}
      \hfil
  \subfigure[Categorical-Chain, Baseline (Local differential privacy). All 4 curves overlap.]{\includegraphics[width=2in,height=1.5in]{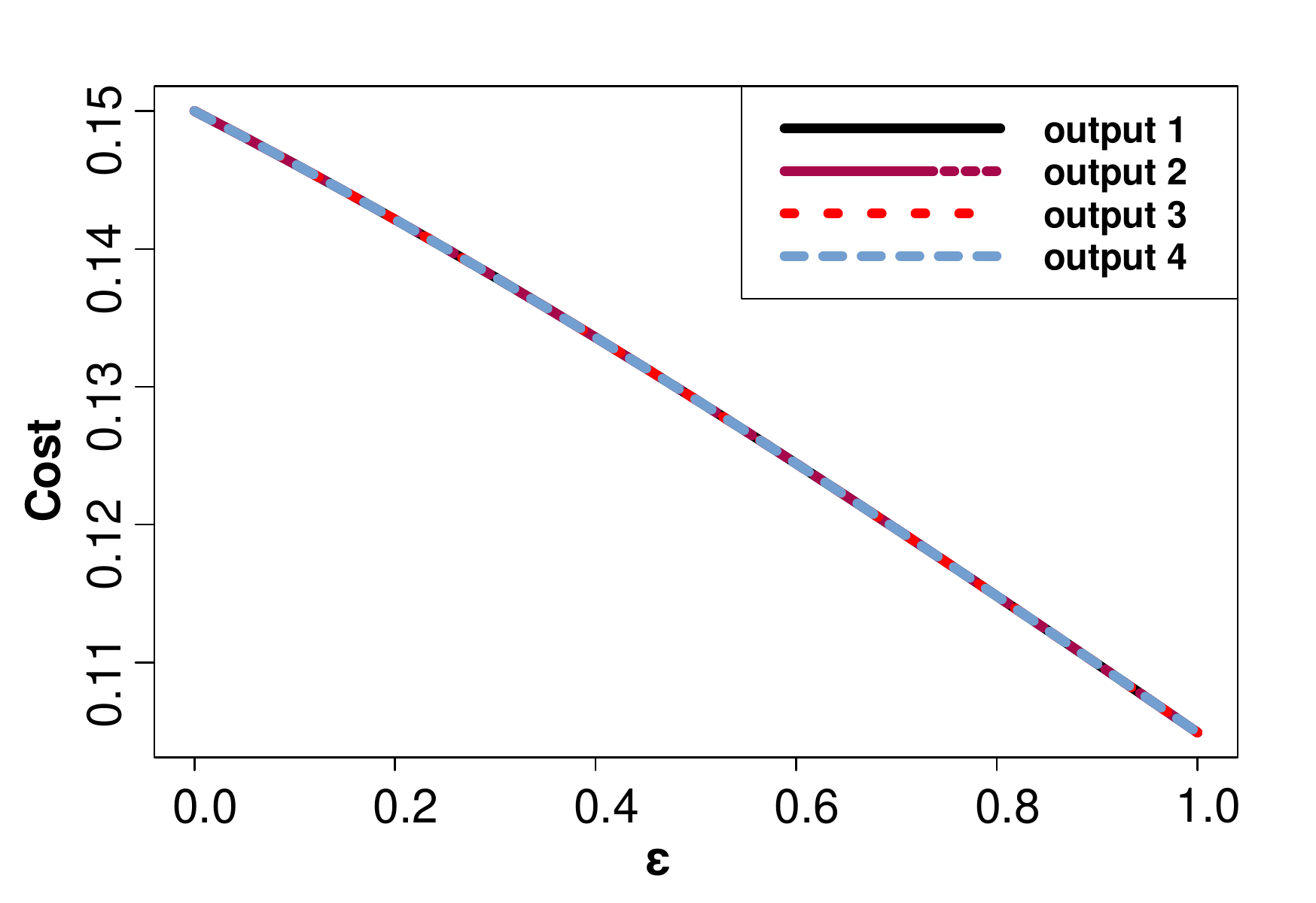}
      \label{subfig:catbaseline}}
    \hfil
    \subfigure[Categorical-Chain, Our Method.]{\includegraphics[width=2in,height=1.5in]{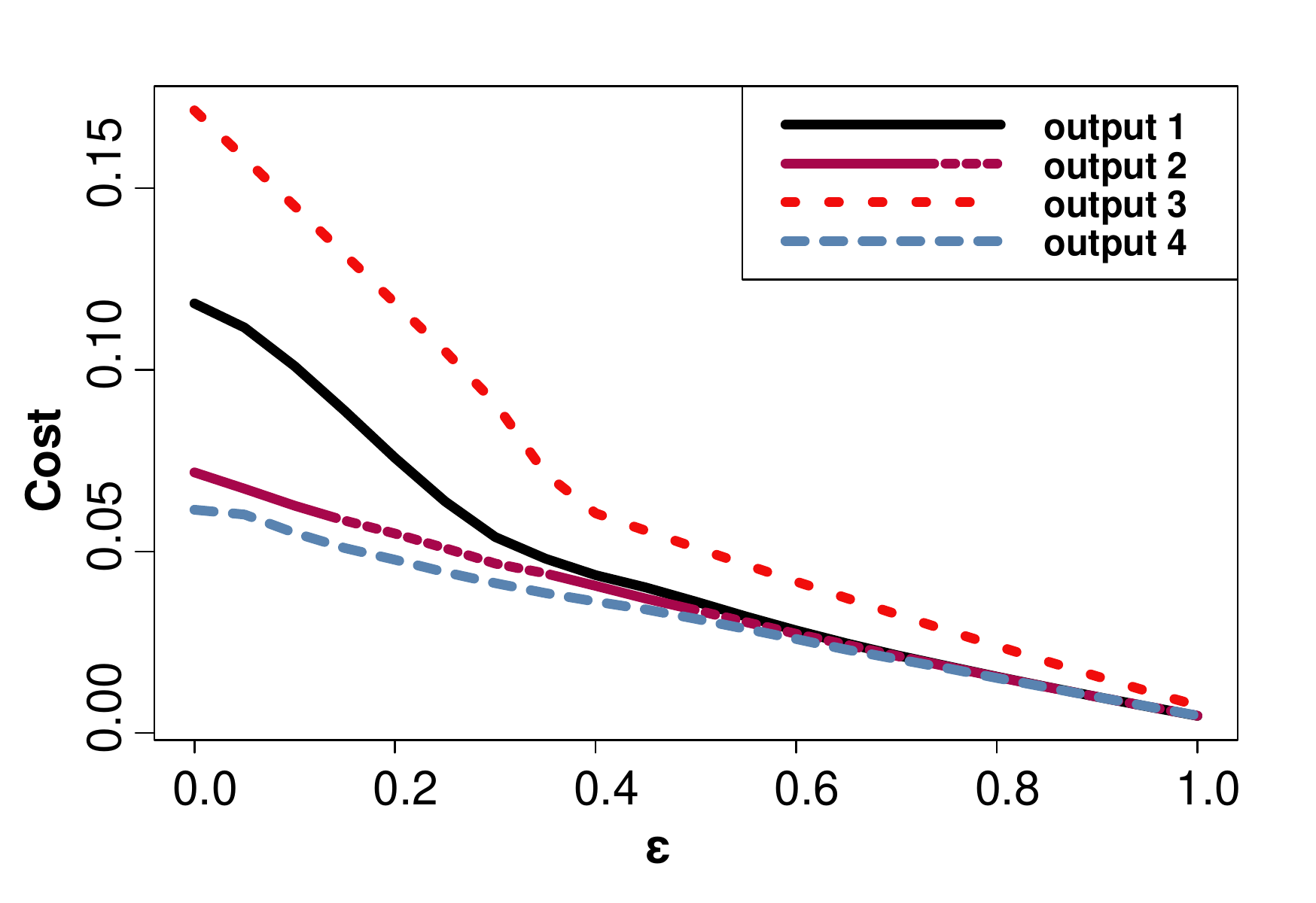}
      \label{subfig:catourmethod}}
      }
  \caption{Experimental results in various settings. In all figures, lower is better.}
  \label{fig:experiments2}
\end{figure}

\begin{figure}
	\centerline{
	\subfigure[Bernoulli-Chain-6, Baseline.]{
            \includegraphics[width=3in,height=3in]{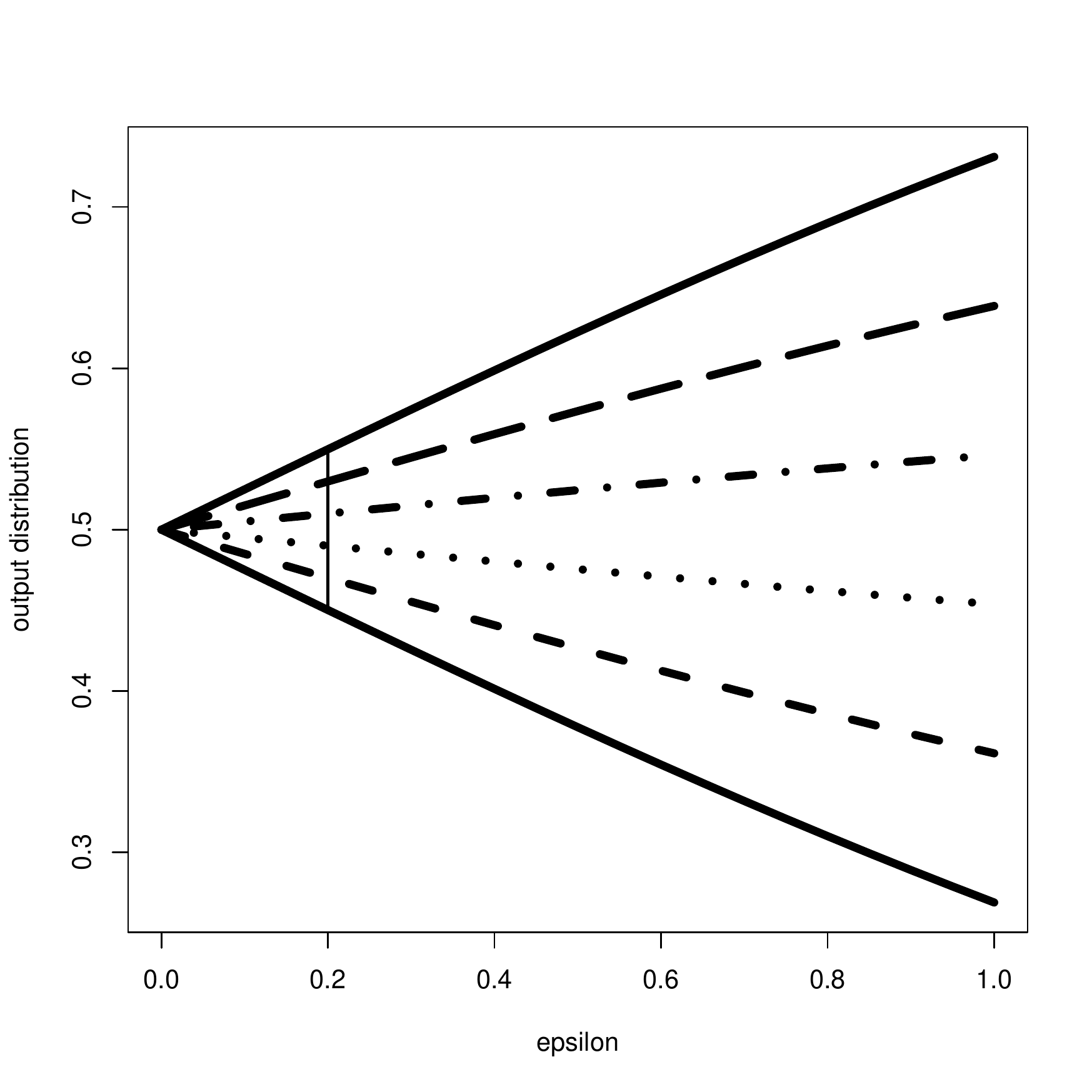}
	}
	\hfil
	\subfigure[Bernoulli-Chain-21, Baseline.]{
		\includegraphics[width=3in,height=3in]{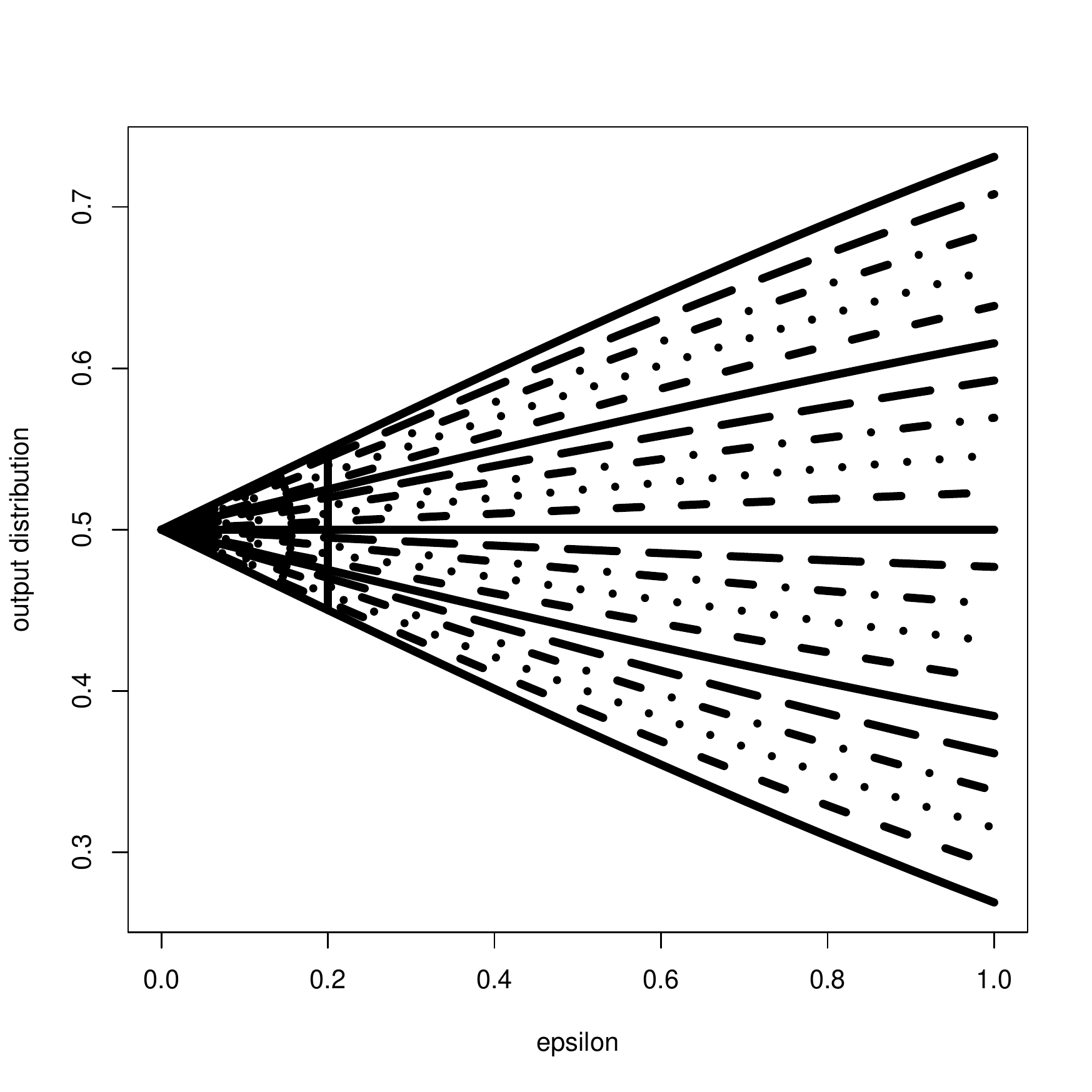}
	}}
	\centerline{
	\subfigure[Bernoulli-Chain-6, Our Method.]{
		\includegraphics[width=3in,height=3in]{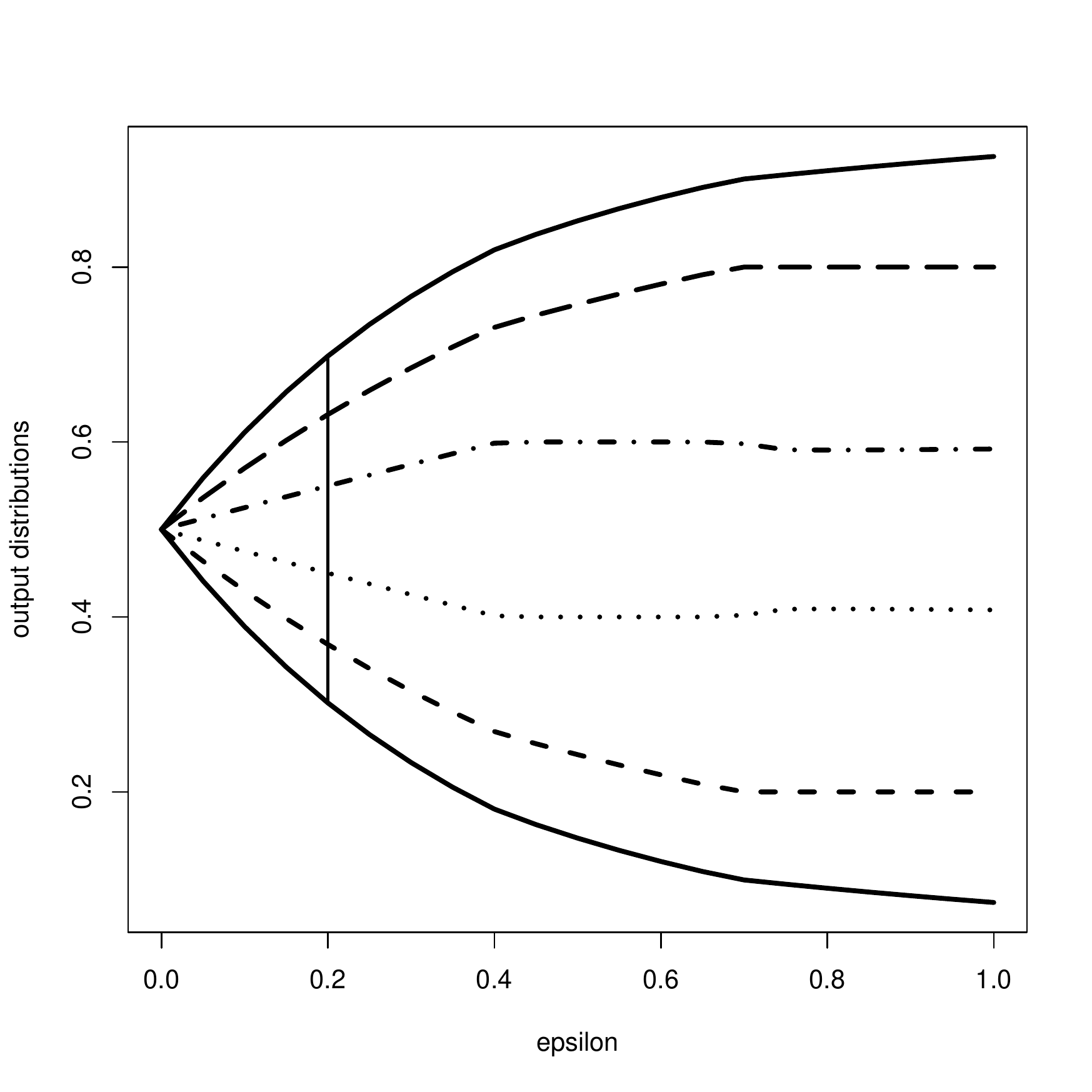}
	}
	\hfil
	\subfigure[Bernoulli-Chain-21, Our Method.]{
		\includegraphics[width=3in,height=3in]{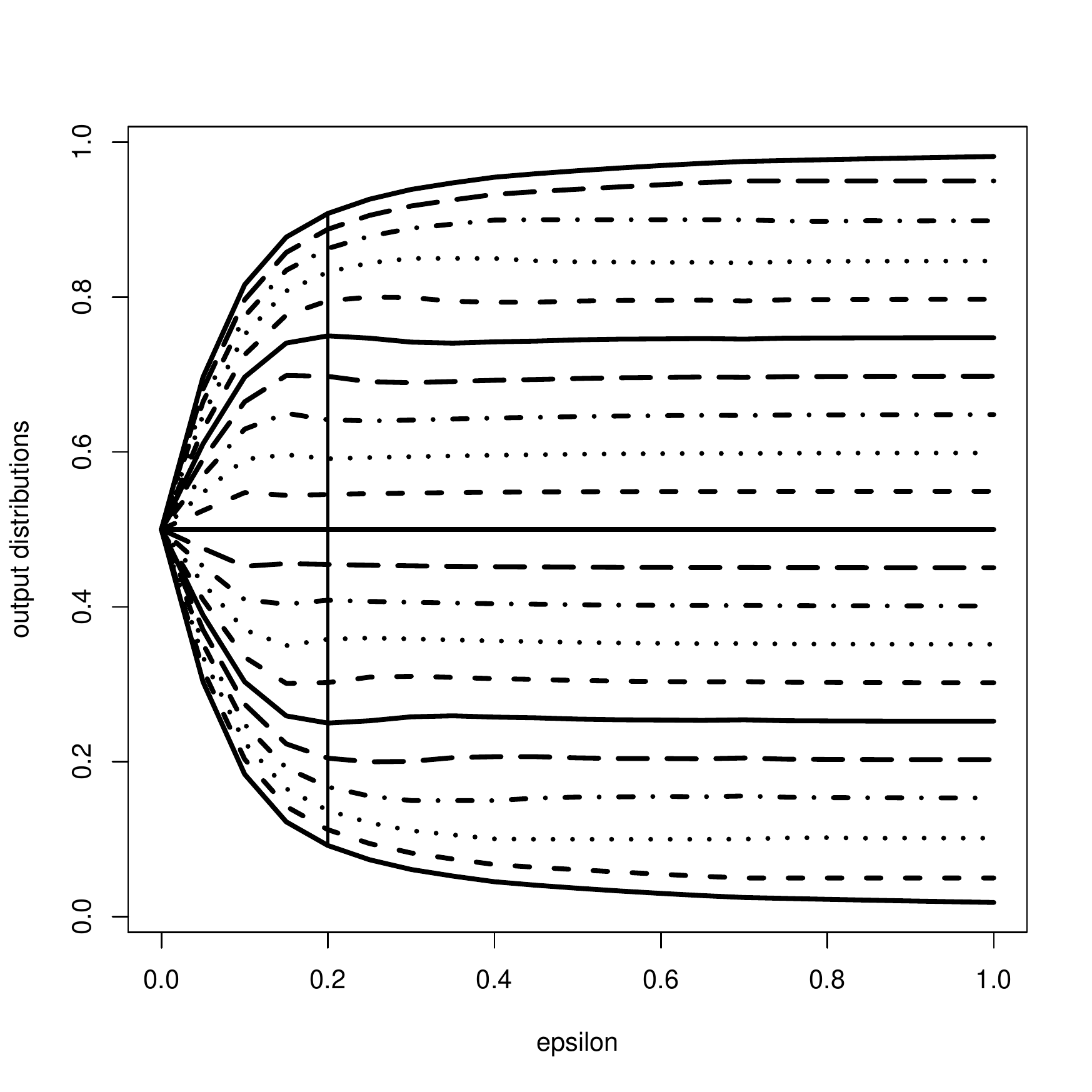}
	}
	}
	\caption{Probability that the output is $1$ as a function of $\epsilon$ for each profile for the Bernoulli-Chain experiments. The top row represents the baseline local differential privacy method, and the bottom row represents our method. The left column is Bernoulli-Chain-6, and the right column is Bernoulli-Chain-21. A higher spread along the interval $[0,1]$ means less noise added and is better. A vertical line has been drawn at $\epsilon=0.2$ to illustrate the spread.}
\label{fig:experiments2}

\end{figure}

Figures~\ref{subfig:catbaseline}-\ref{subfig:catourmethod} plot the utility across different outputs in the Categorical-Chain setting.  We illustrate its behavior through a small setting with 3 profiles, each with 4 categories. We can no longer plot the entirety of these profiles, so at each privacy level we measure the maximum absolute error for each output. Thus, in this setting, each privacy level is associated with 4 costs of the form given in \eqref{eq:catcost}. This permits the higher fidelity of profile-irrelevant information to be seen.

\begin{equation}
cost_j = \text{max}_{i \in [n]} | P^i_j - (P^iA^i)_j| \label{eq:catcost}
\end{equation}

Our experiments show the categories less associated with the profile identity have lower associated costs than the more informative ones. However, the local differential privacy baseline fails to exploit any of this structure and performs worse.

\section{Conclusion}

In conclusion, we provide a novel definition of local privacy -- profile based privacy -- that can achieve better utility than local differential privacy. We prove properties of this privacy definition, and provide mechanisms for two discrete settings. Simulations show that our mechanisms offer superior privacy-utility trade-offs than standard local differential privacy.

\subsection*{Acknowledgements.} We thank ONR under N00014-16-1-261, UC Lab Fees under LFR 18-548554 and NSF under 1804829 for research support.  We also thank Takao Murakami for pointing us to \cite{kawamoto2018differentially} and discussions about Observation~\ref{obs:parallelcomp}.

\bibliography{recentkc,robust,DPEPrefs,privacy2}
\bibliographystyle{abbrv}

\section{Proof of Theorems and Observations}

\begin{Obs}
If a data sample $X_i$ is drawn from profile $P_i$, and $\mathcal{A}$ preserves $(G,\epsilon)$-profile-based privacy, then for any (potentially randomized) function $F$, the release $F(\mathcal{A}(X_i,P_i))$ preserves $(G,\epsilon)$-profile-based privacy.
\end{Obs}

\begin{proof}

Let $X_i \sim P_i$ and $X_j \sim P_j$ represent two random variables drawn from the profiles $P_i$ and $P_j$. We define additional random variables as  $Y_i= \mathcal{A}(X_i,P_i)$ and $Z_i = F(\mathcal{A}(X_i,P_i))$, along with the corresponding $Y_j$ and $Z_j$ that use $X_j$ and $P_j$. This is a result following from a standard data processing inequality.

\begin{align}
\frac{Pr(Z_i = z)}{Pr(Z_j = z)} &= \frac{\int_\mathcal{Y}Pr(Z_i=z,Y_i=y)dy}{\int_\mathcal{Y}Pr(Z_j=z,Y_j=y)dy} \\
&= \frac{\int_\mathcal{Y}Pr(Z_i=z|Y_i=y)Pr(Y_i = y)dY}{\int_\mathcal{Y}Pr(Z_j=z|Y_j=y)Pr(Y_j = y)dY} \\
&\le \max_Y \frac{Pr(Y_i = y)}{Pr(Y_j = y)} \\
&\le e^{\epsilon}
\end{align}

\end{proof}

\begin{Obs}
If two independent samples $X_1$ and $X_2$ are drawn from profile $P_i$, and $\mathcal{A}_1$ preserves $(G,\epsilon_1)$-profile-based privacy and $\mathcal{A}_2$ preserves $(G,\epsilon_2)$-profile-based privacy, then the combined release $(\mathcal{A}_1(X_1,P_i), \mathcal{A}_2(X_2,P_i))$ preserves $(G, \epsilon_1 + \epsilon_2)$-profile-based privacy.

\end{Obs}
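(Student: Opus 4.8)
The plan is to reduce the joint release to a product of two independent marginal releases, apply profile-based privacy to each factor separately, and multiply the resulting bounds --- exactly the classical additive-composition argument for differential privacy. Fix an arbitrary edge $e \in E$ joining profiles $P_i$ and $P_j$. Writing $X_1, X_2$ for the two independent draws from $P_i$ and $X_1', X_2'$ for two independent draws from $P_j$, I need to compare the law of $(\mathcal{A}_1(X_1,P_i),\mathcal{A}_2(X_2,P_i))$ against the law of $(\mathcal{A}_1(X_1',P_j),\mathcal{A}_2(X_2',P_j))$ at an arbitrary output pair $(y_1,y_2)$.

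First I would invoke independence. Since $X_1$ and $X_2$ are independent and the internal randomness of $\mathcal{A}_1$ and $\mathcal{A}_2$ is independent of each other and of the data, the joint output probability factors:
\[ \Pr\big((\mathcal{A}_1(X_1,P_i),\mathcal{A}_2(X_2,P_i)) = (y_1,y_2)\big) = \Pr(\mathcal{A}_1(X_1,P_i)=y_1)\cdot\Pr(\mathcal{A}_2(X_2,P_i)=y_2), \]
and likewise on the $P_j$ side. Hence the likelihood ratio of the joint releases equals the product of the two single-release likelihood ratios.

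Second I would bound each factor. The quantity $\Pr(\mathcal{A}_1(X_1,P_i)=y_1)$ with $X_1 \sim P_i$ is precisely the marginal appearing in the profile-based privacy definition, so the $(G,\epsilon_1)$-privacy of $\mathcal{A}_1$ applied to edge $e$ gives $\Pr(\mathcal{A}_1(X_1,P_i)=y_1) \le e^{\epsilon_1}\Pr(\mathcal{A}_1(X_1',P_j)=y_1)$; symmetrically, the $(G,\epsilon_2)$-privacy of $\mathcal{A}_2$ gives $\Pr(\mathcal{A}_2(X_2,P_i)=y_2) \le e^{\epsilon_2}\Pr(\mathcal{A}_2(X_2',P_j)=y_2)$. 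Multiplying these two inequalities bounds the joint likelihood ratio by $e^{\epsilon_1+\epsilon_2}$. Since $e$, $y_1$, and $y_2$ were arbitrary, this yields $(G,\epsilon_1+\epsilon_2)$-profile-based privacy.

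There is no serious obstacle here; the one point requiring care is the independence bookkeeping. The factorization above genuinely holds only because the two samples are drawn independently from $P_i$ \emph{and} $\mathcal{A}_2$ is non-adaptive, i.e. does not observe the output of $\mathcal{A}_1$. If $\mathcal{A}_2$ were allowed to depend on $\mathcal{A}_1$'s release, one would instead bound $\Pr(\mathcal{A}_2=y_2 \mid \mathcal{A}_1=y_1)$ uniformly; this still works because conditioning on an independent first release does not change the relevant marginal of the second, but the cleanest statement is the non-adaptive one quoted in the observation.
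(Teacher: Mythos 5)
Your proposal is correct and follows essentially the same route as the paper: both arguments fix an edge $(P_i,P_j)$, use the independence of the two samples (and of the mechanisms' internal randomness) to factor the joint output distribution into the product of the two marginals, bound each marginal ratio by $e^{\epsilon_1}$ and $e^{\epsilon_2}$ via the individual privacy guarantees, and multiply. The paper simply makes the factorization explicit by marginalizing over $X_1,X_2$ in an integral, while you assert it directly; your closing remark on non-adaptivity matches the paper's own caveat.
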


\begin{proof}
The proof of this statement relies on the independence of the two releases $Y_1 = \mathcal{A}_1(X_1,P_i)$ and $Y_2 = \mathcal{A}_2(X_2,P_i)$, given the independence of $X_1$ and $X_2$ from the same same profile $P_i$. Let $P_j$ be another profile such that there is an edge $(P_i,P_j)$ in $G$. We will introduce $X_1'$ and $X_2'$ as independent samples from the profile $P_j$, and define  $Y_1' = \mathcal{A}_1(X_1',P_j)$ and $Y_2' = \mathcal{A}_2(X_2',P_j)$ By marginalizing over the two independent variables $X_1, X_2$, we may bound the combined privacy loss. For brevity, we will use $Pr(X)$ as a shorthand for the density at an arbitray point $Pr(X=x)$.

\begin{align}
\frac{Pr(Y_1,Y_2)}{Pr(Y_1',Y_2')} &= \frac{\int_\mathcal{X} \int_\mathcal{X} Pr(X_1,Y_1,X_2,Y_2) dX_1 dX_2}{\int_\mathcal{X} \int_\mathcal{X} Pr(X_1',Y_1',X_2',Y_2'|P_j) dX_1 dX_2} \\
&= \frac{\int_\mathcal{X} \int_\mathcal{X} Pr(X_1,Y_1)Pr(X_2,Y_2) dX_1 dX_2}{\int_\mathcal{X} \int_\mathcal{X} Pr(X_1',Y_1')Pr(X_2',Y_2') dX_1 dX_2} \\
&= \frac{\int_\mathcal{X} Pr(X_1,Y_1) dX_1 \int_\mathcal{X}Pr(X_2,Y_2) dX_2}{\int_\mathcal{X} Pr(X_1',Y_1') dX_1 \int_\mathcal{X} Pr(X_2',Y_2') dX_2} \\
&= \frac{ Pr(Y_1) Pr(Y_2)}{Pr(Y_1')Pr(Y_2')} \\
& \le e^{\epsilon_1}e^{\epsilon_2}
\end{align}

\end{proof}

\begin{Obs}
This proof does not hold if $X_1$ and $X_2$ are not independent samples from $P_i$. This may occur if the same observational data $X$ is privatized twice, or if other correlations exist between $X_1$ and $X_2$. We do not provide a composition result for this case.
\end{Obs}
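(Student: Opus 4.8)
The statement to be established is a negative one: it says that the preceding additive-composition argument collapses once $X_1$ and $X_2$ are dependent, and that no replacement guarantee is being claimed. I would therefore present the "proof" in two parts. The first is to point at the exact step of the additive-composition derivation that consumes the independence hypothesis, namely the factorization $\Pr(X_1,Y_1,X_2,Y_2) = \Pr(X_1,Y_1)\,\Pr(X_2,Y_2)$ (the second line of that derivation). This equality holds precisely because $X_1 \perp X_2$ together with the two mechanisms drawing fresh, independent randomness. When $X_1 = X_2 = X$ (or are otherwise correlated), $Y_1$ and $Y_2$ are only \emph{conditionally} independent given $X$, so $\Pr(Y_1,Y_2) = \mathbb{E}_X\!\left[\Pr(Y_1 \mid X)\,\Pr(Y_2 \mid X)\right]$, which in general differs from $\Pr(Y_1)\Pr(Y_2)$; the telescoping that yielded the bound $e^{\epsilon_1}e^{\epsilon_2}$ then has no analogue, so the argument proves nothing about the joint release.

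The second and more substantive part is to exhibit a concrete instance in which the joint release truly violates $(G,\epsilon_1+\epsilon_2)$-profile-based privacy, ruling out any alternative rescue of additive composition. I would use the one-bit setting with two Bernoulli profiles joined by a single edge: $P_i$ with $p_i = 0$, $P_j$ with $p_j = 1/2$, target $e^{\epsilon} = 2$, and $\mathcal{A} = \mathcal{A}_1 = \mathcal{A}_2$ the bit-flip mechanism, which for this instance is exactly the single-bit two-profile mechanism (Algorithm~\ref{alg:onebittwoprof}) and solves~\eqref{eq:onebitprob} with flipping probability $\alpha = 1/4$. A short check of the two ratio constraints shows $\mathcal{A}$ is $(G,\ln 2)$-profile-based private: the output laws on $\{0,1\}$ are $(3/4,1/4)$ under $P_i$ and $(1/2,1/2)$ under $P_j$, so the worst ratio is exactly $2$. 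Now form the release $(\mathcal{A}_1(X,P), \mathcal{A}_2(X,P))$ obtained by feeding the \emph{same} draw $X$ to both copies with independent coins, and evaluate it at the outcome $(1,1)$: under $P_i$ its probability is $\alpha^2 = 1/16$, whereas under $P_j$ it is $\tfrac12(1-\alpha)^2 + \tfrac12\alpha^2 = 5/16$. The ratio is $5 > 4 = e^{2\epsilon}$, so the combined release is not $(G,2\epsilon)$-profile-based private.

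Finally, to justify declining to state \emph{any} composition theorem for this case, I would push the same family: keeping $p_i = 0$ and taking $p_j = \alpha/(1-2\alpha)$ (the largest value still compatible with the single-shot constraint) pins the single-application guarantee at $\epsilon = \ln 2$ while the two-fold ratio at $(1,1)$, namely $\big(p_j(1-\alpha)^2 + (1-p_j)\alpha^2\big)/\alpha^2$, diverges as $\alpha \to 0$; hence no finite multiple $c\epsilon$ bounds the loss of the repeated release. The one point that needs care is bookkeeping: verifying that the chosen $(\alpha, p_i, p_j, \epsilon)$ meets \emph{both} inequalities of~\eqref{eq:onebitprob}, with equality on the binding one, so that the single-shot guarantee is exactly as advertised and the observed failure is attributable solely to reusing $X$ rather than to slack in the parameters. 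It is also worth noting that this is the same mechanism by which differential privacy fails to compose over correlated records, so no new machinery beyond the explicit example is required.
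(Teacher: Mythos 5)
Your proposal is correct, but it does considerably more than the paper does: the paper offers no proof of this observation at all -- it is stated as a bare remark, whose implicit justification is exactly your first part, namely that the factorization $\Pr(X_1,Y_1,X_2,Y_2)=\Pr(X_1,Y_1)\Pr(X_2,Y_2)$ in the additive-composition derivation is where independence is consumed, and that step has no analogue when $X_1=X_2$ or the samples are otherwise correlated. Your second and third parts go further and establish a strictly stronger fact: not merely that the paper's argument breaks, but that the conclusion itself genuinely fails. Your numbers check out: with $p_i=0$, $p_j=1/2$, $e^{\epsilon}=2$, the optimization \eqref{eq:onebitprob} gives $\alpha=1/4$ with worst single-shot ratio exactly $2$, and reusing the same draw gives the outcome $(1,1)$ probability $\alpha^2=1/16$ under $P_i$ versus $\tfrac12(1-\alpha)^2+\tfrac12\alpha^2=5/16$ under $P_j$, a ratio of $5>e^{2\epsilon}=4$; and taking $p_j=\alpha/(1-2\alpha)$ with $\alpha\to 0$ (valid for $\alpha\le 1/3$, where the output-$0$ constraint $\tfrac{1-\alpha}{1-2\alpha}\le 2$ still holds) shows the two-fold loss at $(1,1)$ grows like $1/\alpha$, so no fixed multiple of $\epsilon$ can serve as a composition bound. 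What your approach buys is a justification for the paper's decision to provide no result in this case -- the guarantee is not just unproven but unprovable in any additive or constant-factor form -- whereas the paper's remark only disclaims its own proof technique. Strictly speaking, only your first paragraph is needed to certify the statement as written; the counterexample is a worthwhile strengthening, and the only care needed is the parameter bookkeeping you already flag (both constraints of \eqref{eq:onebitprob} verified, with the binding one tight).
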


\begin{Obs}
If two profiles $P_i$ and $P_j$ are independently selected, and two observations $X_i \sim P_i$ and $X_j \sim P_j$ are drawn,  and $\mathcal{A}_1$ preserves $(G,\epsilon_1)$-profile-based privacy and $\mathcal{A}_2$ preserves $(G,\epsilon_2)$-profile-based privacy, then the combined release $(\mathcal{A}_1(X_i,P_i), \mathcal{A}_2(X_j,P_j))$ preserves $(G, \text{max}\{\epsilon_1,\epsilon_2\})$-profile-based privacy.

\end{Obs}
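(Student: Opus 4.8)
The plan is to mirror the additive-composition argument from the previous observation, but to exploit the fact that in the combined setting an edge of the profile graph changes the profile of only \emph{one} of the two independent rounds. Concretely, I would view the combined release $\mathcal{B}(X_i,X_j,P_i,P_j) = (\mathcal{A}_1(X_i,P_i),\mathcal{A}_2(X_j,P_j))$ as a mechanism whose ``profile'' is the ordered pair $(P_i,P_j)$, where two such pairs are declared adjacent exactly when they agree in one coordinate and their differing coordinates form an edge of $G$. It therefore suffices to bound, for every edge $(P_i,P_i') \in E$ (the case of an edge $(P_j,P_j')$ in the second round being symmetric) and for all outputs $(y_1,y_2)$, the ratio
\[
\frac{\Pr\!\big(\mathcal{A}_1(X_i,P_i)=y_1,\ \mathcal{A}_2(X_j,P_j)=y_2\big)}{\Pr\!\big(\mathcal{A}_1(X_i',P_i')=y_1,\ \mathcal{A}_2(X_j,P_j)=y_2\big)},
\]
with $X_i\sim P_i$, $X_i'\sim P_i'$ and $X_j\sim P_j$ all independent.

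First I would invoke independence: because the two profile-selection processes are independent and $\mathcal{A}_2$ is non-adaptive (it does not observe the first release), the releases $Y_1=\mathcal{A}_1(X_i,P_i)$ and $Y_2=\mathcal{A}_2(X_j,P_j)$ are independent, so $\Pr(Y_1=y_1,Y_2=y_2)=\Pr(Y_1=y_1)\Pr(Y_2=y_2)$, and likewise with $P_i'$ in the first slot. I would justify this with the same marginalization computation used in the additive-composition proof: integrate out $X_i$ and $X_j$ and use that the integrand factors across the two rounds. Second, the factor $\Pr(Y_2=y_2)$ belonging to the unchanged round is identical in the numerator and denominator and cancels, leaving $\Pr(\mathcal{A}_1(X_i,P_i)=y_1)/\Pr(\mathcal{A}_1(X_i',P_i')=y_1)$. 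Since $(P_i,P_i')\in E$ and $\mathcal{A}_1$ is $(G,\epsilon_1)$-profile-based private, this is at most $e^{\epsilon_1}\le e^{\max\{\epsilon_1,\epsilon_2\}}$; the symmetric case (an edge in the second round) gives the bound $e^{\epsilon_2}\le e^{\max\{\epsilon_1,\epsilon_2\}}$. Combining the two cases yields $(G,\max\{\epsilon_1,\epsilon_2\})$-profile-based privacy for $\mathcal{B}$.

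I expect the only genuine obstacle to be a modeling one rather than a computational one: making precise that the combined guarantee protects exactly the edges that flip a single round's profile while holding the other fixed — and therefore does \emph{not} cover simultaneous changes in both rounds, which would amount to a correlation between the two selection processes that the definition cannot handle — together with recording the non-adaptivity assumption on $\mathcal{A}_2$ that is what makes $Y_1$ and $Y_2$ independent. Once these points are pinned down, the factorization, the cancellation of the unchanged round, and the appeal to the single-mechanism guarantees are all immediate.
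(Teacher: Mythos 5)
Your proposal is correct and follows essentially the same route as the paper: use independence of the two rounds to factor the joint release probabilities, cancel the unchanged round's factor, invoke the single-mechanism guarantee for the round whose profile sits on the edge, and take the maximum over the two symmetric cases. The only cosmetic difference is that you hold the other round's profile fixed while the paper marginalizes (sums) over it, which yields the same cancellation; your explicit notes on non-adaptivity and on not covering simultaneous profile changes match the paper's own caveats.
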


\begin{proof}
For the purposes of this setting, let $Q_1$ and $Q_2$ be two random variables representing the choice of profile in the first and second selections, with the random variables $X_1 \sim Q_1$ and $X_2 \sim Q_2$.

Since the two profiles and their observations are independent, the two releases $Y_1 = \mathcal{A}_1(X_1,Q_1)$ and $Y_2 = \mathcal{A}_2(X_2,Q_2)$ contain no information about each other. That is, $Pr(Y_1=y_1|Q_1=P_i,Q_2 = P_j, Y_2=y_2) = Pr(Y_i|Q_1 = P_i)$. Similarly we have $Pr(Y_2=y_2|Q_1 = P_i,Y_1=y_1,Q_2=P_j) = Pr(Y_2=y_2)$.

Let $P_h$ and $P_k$ be profiles such that the edges $(P_h,P_i)$ and $(P_j,P_k)$ are in $G$.

\begin{align}
\frac{Pr(Y_1,Y_2|Q_1=P_i)}{Pr(Y_1,Y_2|Q_1=P_h)} &= \frac{\sum_{Q_2} Pr(Y_1,Y_2,Q_2|Q_1=P_i)}{\sum_{Q_2} Pr(Y_1,Y_2,Q_2|Q_1=P_h)} \\
&= \frac{\sum_{Q_2} Pr(Y_1|Q_1=P_i,Y_2,Q_2)Pr(Y_2,Q_2|Q_1=P_i)}{\sum_{Q_2} Pr(Y_1|Q_1=P_h,Y_2,Q_2)Pr(Y_2,Q_2|Q_1=P_h)} \\
&= \frac{\sum_{Q_2} Pr(Y_1|Q_1=P_i)Pr(Y_2,Q_2)}{\sum_{Q_2} Pr(Y_1|Q_1=P_h)Pr(Y_2,Q_2)} \\
&= \frac{Pr(Y_1|Q_1=P_i)}{Pr(Y_1|Q_1=P_h)} \cdot \frac{\sum_{Q_2} Pr(Y_2,Q_2)}{\sum_{Q_2}Pr(Y_2,Q_2)} \\
&= \frac{Pr(Y_1|Q_1=P_i)}{Pr(Y_1|Q_1=P_h)}\\
&\le e^{\epsilon_1}
\end{align}

A similar derivation conditioning on $P_2=P_j$ results in a ratio bounded by $e^{\epsilon_2}$ over the edge $(P_j,P_k)$. Thus to get a single bound for the combined release $(Y_1,Y_2)$ over the edges of the graph $G$, we take the maximum $e^{\text{max}\{\epsilon_1,\epsilon_2\}}$.

\end{proof}

\begin{Obs}
This proof does not hold if the profile selection process is not independent. We do not provide a composition result for this case.

\end{Obs}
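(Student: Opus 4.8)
The plan is twofold: (a) pinpoint exactly which step in the proof of Observation~\ref{obs:parallelcomp} consumes the independence of the two profile-selection processes and show that it fails under correlation, and (b) exhibit a concrete counterexample in which each mechanism is individually $(G,\epsilon_i)$-profile-based private yet the joint release violates $(G,\max\{\epsilon_1,\epsilon_2\})$-profile-based privacy, so that the $\max$ bound is not merely unproven but genuinely false. This makes the non-result honest: it is not that our analysis is too weak, but that no mechanism-independent guarantee of this form exists.

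For part (a), I would revisit the displayed chain for Observation~\ref{obs:parallelcomp}. The only step using independence is the passage that rewrites $\sum_{Q_2} Pr(Y_1 \mid Q_1=P_i, Y_2, Q_2)\,Pr(Y_2,Q_2\mid Q_1=P_i)$ as $Pr(Y_1\mid Q_1=P_i)\sum_{Q_2} Pr(Y_2,Q_2)$. Assuming the mechanisms are non-adaptive (as the original statement already requires), the factor $Pr(Y_1 \mid Q_1=P_i,Y_2,Q_2) = Pr(Y_1\mid Q_1=P_i)$ still holds, since conditioning on $Q_1$ blocks every path from $Y_1$ to round two. What breaks is $Pr(Y_2,Q_2\mid Q_1=P_i) = Pr(Y_2,Q_2)$: when $Q_2$ depends on $Q_1$, this marginal depends on which profile was chosen in round one, so the round-two factors no longer cancel between numerator and denominator, and the ratio collapses to $\frac{Pr(Y_1\mid Q_1=P_i)}{Pr(Y_1\mid Q_1=P_h)}\cdot\frac{Pr(Y_2\mid Q_1=P_i)}{Pr(Y_2\mid Q_1=P_h)}$ rather than to the first factor alone. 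Intuitively, $Y_2$ now also leaks information about $Q_1$, so the two losses compound instead of one being dominated.

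For part (b), I would take the degenerate correlation $Q_2 = Q_1$ almost surely (the extreme case of a dependent selection). Then the joint release $(\mathcal{A}_1(X_1,Q_1),\mathcal{A}_2(X_2,Q_1))$ with $X_1,X_2$ independent draws from $Q_1$ is exactly the additive-composition scenario of the earlier observation, so the relevant loss is $\epsilon_1+\epsilon_2$, and this is tight. Concretely, instantiate the single-bit setting with two profiles $P_0 = \mathrm{Bernoulli}(0)$ and $P_1 = \mathrm{Bernoulli}(1)$ joined by an edge, and let $\mathcal{A}_1=\mathcal{A}_2$ be the bit-flip mechanism with $\alpha = 1/(1+e^{\epsilon})$, which is $(G,\epsilon)$-profile-based private here (it reduces to randomized response). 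The joint output $(0,0)$ has probability $(1-\alpha)^2$ under $P_0$ and $\alpha^2$ under $P_1$, so its likelihood ratio is $((1-\alpha)/\alpha)^2 = e^{2\epsilon} > e^{\epsilon} = e^{\max\{\epsilon,\epsilon\}}$. Hence no $(G,\max\{\epsilon_1,\epsilon_2\})$ bound can hold, which is precisely why we give no composition result in this case. If desired I would also sketch the paper's ``$P_k$ always follows $P_i$ or $P_j$'' scenario as a less degenerate illustration, but the $Q_2=Q_1$ example already settles the impossibility.

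The main obstacle is not a hard inequality but careful bookkeeping: one must state precisely which conditional-independence assertion is invoked at each line of the original derivation (and why non-adaptivity alone does not rescue it), and one must verify for the counterexample both that each $\mathcal{A}_i$ individually meets its $(G,\epsilon_i)$ guarantee and that the joint likelihood ratio strictly exceeds $e^{\max\{\epsilon_1,\epsilon_2\}}$ — i.e., that the failure is real and not an artifact of a loose analysis.
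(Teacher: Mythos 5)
Your proposal is correct, and it actually does more than the paper itself: the paper states this observation as a bare remark with no proof, offering only the informal main-text discussion of correlated profile selection (the ``$P_k$ always follows $P_i$ or $P_j$'' scenario). Your part (a) correctly isolates the one step of the Observation~\ref{obs:parallelcomp} derivation that consumes independence --- the replacement of $Pr(Y_2,Q_2\mid Q_1=P_i)$ by $Pr(Y_2,Q_2)$, so that the round-two factors cancel --- and correctly notes that the conditional independence $Pr(Y_1\mid Q_1,Y_2,Q_2)=Pr(Y_1\mid Q_1)$ survives correlation, so the ratio degrades to a product of two per-round factors rather than the first factor alone. Your part (b) counterexample also checks out: with profiles $\mathrm{Bernoulli}(0)$ and $\mathrm{Bernoulli}(1)$ joined by an edge, the bit-flip mechanism with $\alpha=1/(1+e^{\epsilon})$ is exactly randomized response and is individually $(G,\epsilon)$-profile-based private, while under the fully correlated selection $Q_2=Q_1$ the joint output $(0,0)$ has likelihood ratio $\bigl((1-\alpha)/\alpha\bigr)^2=e^{2\epsilon}>e^{\max\{\epsilon,\epsilon\}}$, so the parallel-composition bound is genuinely false rather than merely unproven (and, as you note, this correlated case coincides with the additive-composition setting, where $\epsilon_1+\epsilon_2$ is the right and tight bound). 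The only caveat is one of framing: the paper's observation is a disclaimer (``we do not provide a result''), so a reviewer should understand your counterexample as justifying that disclaimer in the strongest sense --- no mechanism-independent $\max$ guarantee can exist --- which is a strictly more informative statement than the paper commits to.
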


\begin{Thm}

The One Bit Cluster mechanism achieves $\epsilon$-profile based privacy.

\end{Thm}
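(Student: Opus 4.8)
The plan is to verify the defining inequality \eqref{profprivconstraint} one edge at a time, reducing each edge's requirement to the feasibility of the two-profile optimization \eqref{eq:onebitprob} combined with the post-processing observation proved above. Fix an arbitrary edge $e \in E$ joining profiles $P_i$ and $P_j$, with $X_i \sim P_i$ and $X_j \sim P_j$. Since $P_i$ and $P_j$ are joined by $e$, they lie in the same connected component $C$ of $G$, so when the One Bit Cluster mechanism (Algorithm~\ref{alg:onebit}) is run with either $P_i$ or $P_j$ as its input profile it uses the \emph{same} flipping probability $\alpha = \max_{e' \in C}\alpha_{e'}$. Moreover every $\alpha_{e'} \in [0,\tfrac12]$: the value $\tfrac12$ is always feasible for \eqref{eq:onebitprob} (both ratios then equal $1$), so the minimizer is at most $\tfrac12$, and it is nonnegative by the explicit constraint. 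Hence $\alpha_e \le \alpha \le \tfrac12$.

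The crux is that flipping with the larger probability $\alpha$ is a profile-independent post-processing of flipping with the edge-specific probability $\alpha_e$. Indeed, set $\beta = \frac{\alpha - \alpha_e}{1 - 2\alpha_e}$ when $\alpha_e < \tfrac12$, noting $\beta \in [0,\tfrac12]\subseteq[0,1]$ (and when $\alpha_e = \tfrac12$ we have $\alpha = \tfrac12$ and may take $\beta$ arbitrary). Composing an independent $\alpha_e$-flip with an independent $\beta$-flip yields a net flip with probability $\alpha_e(1-\beta) + (1-\alpha_e)\beta = \alpha_e + \beta(1-2\alpha_e) = \alpha$. So, writing $\mathcal{A}_e$ for the Single-bit Two-profile Mechanism (Algorithm~\ref{alg:onebittwoprof}) instantiated on edge $e$ and $F$ for the map that flips its input bit with probability $\beta$, the output $\mathcal{A}(b,P)$ of the cluster mechanism has the same distribution as $F(\mathcal{A}_e(b,P))$ for every bit $b$ and every $P \in \{P_i,P_j\}$, and $F$ does not depend on the profile.

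Now $\mathcal{A}_e$ satisfies the edge-$e$ instance of \eqref{profprivconstraint}, since its flipping probability $\alpha_e$ is feasible for \eqref{eq:onebitprob}, whose two ratio constraints are literally the assertions $\frac{\Pr(\mathcal{A}_e(X_i,P_i)=t)}{\Pr(\mathcal{A}_e(X_j,P_j)=t)}\in[e^{-\epsilon},e^{\epsilon}]$ for $t=1$ and $t=0$. Applying the post-processing observation to the fixed randomized function $F$ then gives $\frac{\Pr(\mathcal{A}(X_i,P_i)=y)}{\Pr(\mathcal{A}(X_j,P_j)=y)} = \frac{\Pr(F(\mathcal{A}_e(X_i,P_i))=y)}{\Pr(F(\mathcal{A}_e(X_j,P_j))=y)} \le e^{\epsilon}$ for every $y$, and the reverse bound by exchanging the roles of $P_i$ and $P_j$. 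As $e$ was arbitrary, $\mathcal{A}$ achieves $(G,\epsilon)$-profile-based privacy. The only real obstacle is the step isolated in the second paragraph: the mechanism must commit to one flipping probability for the whole component — the maximum over its edges — which in general exceeds what edge $e$ alone needs, so one must show that over-perturbing relative to a single edge never breaks that edge's constraint; the post-processing reduction makes this immediate, though one could alternatively check directly that $\frac{p_i+\alpha(1-2p_i)}{p_j+\alpha(1-2p_j)}$ and its complement are monotone in $\alpha$ with limit $1$ as $\alpha\to\tfrac12$, hence remain in $[e^{-\epsilon},e^{\epsilon}]$ throughout $[\alpha_e,\tfrac12]$.
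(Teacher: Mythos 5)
Your proof is correct, but it takes a genuinely different route from the paper's. The paper proves the key step --- that using the component-wide maximum $\alpha = \max_{e}\alpha_e$ cannot violate any individual edge's constraint --- by a direct monotonicity argument: it differentiates the log-ratios $\log\frac{p_i(1-\alpha)+(1-p_i)\alpha}{p_j(1-\alpha)+(1-p_j)\alpha}$ (and the analogous ratio for output $0$) with respect to $\alpha$, shows the derivative has the sign that pushes the ratio toward $1$, and concludes that every $\alpha' \in (\alpha_e, \tfrac12)$ achieves at least the privacy level of $\alpha_e$; this is essentially the fallback you sketch in your last sentence. You instead observe that an $\alpha$-flip is distributionally identical to an $\alpha_e$-flip followed by an independent, profile-independent $\beta$-flip with $\beta = \frac{\alpha-\alpha_e}{1-2\alpha_e} \in [0,\tfrac12]$, so the cluster mechanism is a post-processing of the feasible two-profile mechanism on edge $e$, and the paper's post-processing observation finishes the argument edge by edge. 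Your bookkeeping is sound: $\alpha_e \le \tfrac12$ because $\tfrac12$ is always feasible for \eqref{eq:onebitprob}, both endpoints of an edge lie in the same connected component and hence receive the same $\alpha$, feasibility of $\alpha_e$ is literally the edge-$e$ instance of \eqref{profprivconstraint} for both outputs and both directions, and the degenerate case $\alpha_e=\tfrac12$ is handled. The trade-off: your reduction avoids calculus entirely and reuses an already-proved invariance property (applied to the single-edge subgraph, which its proof supports), making the ``over-perturbing never hurts'' intuition structural rather than computational; the paper's derivative computation is more self-contained and yields the quantitative monotonicity of the privacy level in $\alpha$ as a standalone fact, which is also what underlies the bound in Observation~\ref{thm:utility}.
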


\begin{proof}
	By direct construction, it is known that the flipping probabilities generated for single edges $\alpha_e$ will satisfy the privacy constraints. What remains to be shown for the privacy analysis is that taking  $\alpha = \max_{e} \alpha_e$ will satisfy the privacy constraints for all the edges simulataneously.
	
	To show this, we will demonstrate a monotonicity property: if a flipping probability $\alpha < 0.5$ guarantees a certain privacy level, then so too do all the probalities in the interval $(\alpha,0.5)$. By taking the maximum across all edges, this algorithm exploits the monotonicity to ensure all the constraints are met simultaneously.
	
	Let $x_1\sim P_1$ and $x_2 \sim P_2$, and let $p_i$ and $p_j$ denote the parameters of these two Bernouli distributions. When computing the privacy level, we have two output values and thus two output ratios to consider:
	
	\begin{align}
	\left|\log\frac{Pr(\mathcal{A}(x_1,P_1)=1)}{Pr(\mathcal{A}(x_2,P_2)=1)} \right| &= \left| \log \frac{ p_1\cdot (1-\alpha) + (1-p_1) \cdot \alpha}{p_2\cdot (1-\alpha) + (1-p_2)\cdot \alpha} \right|\\
	\left|\log\frac{Pr(\mathcal{A}(x_1,P_1)=0)}{Pr(\mathcal{A}(x_2,P_2)=0)} \right| &= \left| \log \frac{ p_1\cdot \alpha + (1-p_1) \cdot (1-\alpha)}{p_2\cdot \alpha + (1-p_2)\cdot (1-\alpha)} \right|
	\end{align}
	
	Without loss of generality, assume $p_1 > p_2$. (If they are equal, then all possible privacy levels are achieved trivially.) This means following two quantities are positive and equal to the absolute values above when $\alpha < 0.5$.
	
	\begin{align}
	\log\frac{Pr(\mathcal{A}(x_1,P_1)=1)}{Pr(\mathcal{A}(x_2,P_2)=1)} &=  \log \frac{ p_1\cdot (1-\alpha) + (1-p_1) \cdot \alpha}{p_2\cdot (1-\alpha) + (1-p_2)\cdot \alpha} \\
	\log\frac{Pr(\mathcal{A}(x_1,P_1)=0)}{Pr(\mathcal{A}(x_2,P_2)=0)}  &=  \log \frac{ p_2\cdot \alpha + (1-p_2) \cdot (1-\alpha)}{p_1\cdot \alpha + (1-p_1)\cdot (1-\alpha)}
	\end{align}
	
	Our next task is to show that these quantities reveal monotonic increases in privacy levels as $\alpha$ increases up to 0.5. Taking just the $\mathcal{A}(x_1,P_1)=1$ term for now, we compute the derivatives.
	
	\begin{align}
	\frac{\partial}{\partial \alpha} \left[ \log \frac{ p_1\cdot (1-\alpha) + (1-p_1) \cdot \alpha}{p_2\cdot (1-\alpha) + (1-p_2)\cdot \alpha} \right]&= \frac{ 1-2p_1}{p_1\cdot (1-\alpha) + (1-p_1) \cdot \alpha} - \frac{ 1-2p_2}{p_2\cdot (1-\alpha) + (1-p_2) \cdot \alpha} \\
	&\hskip-60pt = \frac{ 1-2p_1}{p_1 + (1-2p_1)\alpha} - \frac{ 1-2p_2}{p_2 + (1-2p_2)\alpha} \\
	&\hskip-60pt= \frac{ \left(1-2p_1\right) \left(p_2\cdot (1-\alpha) + (1-p_2) \cdot \alpha\right) - \left( 1-2p_2\right)\left(p_1\cdot (1-\alpha) + (1-p_1) \cdot \alpha\right) }{\left(p_1\cdot (1-\alpha) + (1-p_1) \cdot \alpha\right)\left(p_2\cdot (1-\alpha) + (1-p_2) \cdot \alpha\right)} \\
	&\hskip-60pt= \frac{p_2 - p_1}{Pr(\mathcal{A}(x_1,P_1)=1)Pr(\mathcal{A}(x_2,P_2)=1)} \leq 0
	\end{align}
	
	The final inequality arises from our assumption that $p_1 > p_2$. A similar computation on the $\mathcal{A}(x_1,P_1)=0$ term also finds that the derivative is always negative.
	
	This monotonicity implies that increasing $\alpha$ (up to 0.5 at most) only decreases the probability ratios. In the limit when $\alpha = 0.5$, the ratios are precisely 1 (and the logarithm is 0). Thus if $\alpha < 0.5$ achieves a certain privacy level, all $\alpha'$ satisfying $\alpha < \alpha' < 0.5$ achieve a privacy level at least as strong.
	
	The One Bit Cluster mechanism takes the maximum across all edges, ensuring the final flipping probability is no less than the value needed by each edge to achieve probability ratios within $e^{\pm\epsilon}$. Therefore each edge constraint is satisfied by the final choice of flipping probability, and the mechanism satisfies the privacy requirements. 
	
\end{proof}

\begin{theorem}
The Smooth One Bit mechanism achieves $(G,\epsilon)$-profile-based privacy.
\end{theorem}

\begin{theorem}
The Smooth Categorical mechanism achieves $(G,\epsilon)$-profile-based privacy.
\end{theorem}

The Smooth One Bit mechanism and Smooth Categorical mechanism satisfy a privacy analysis directly from the constraints of the optimization problem. These optimizations are done without needing any access to a sensitive observation, and as such pose no privacy risk. Implicitly, the solution to the optimization problem is verified to satisfy the constraints before being used.

 \begin{theorem}
 If $\mathcal{A}$ is a mechanism that preserves $\epsilon$-local differential privacy, then for any graph $G$ of sensitive profiles, $\mathcal{A}$ also preserves $(G,\epsilon)$-profile-based differential privacy.
 \end{theorem}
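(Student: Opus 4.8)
The plan is to reduce the profile-based constraint directly to the local differential privacy constraint by marginalizing over the data drawn from each profile. First I would reconcile the two signatures: an $\epsilon$-locally-private mechanism $\mathcal{A}:\mathcal{X}\to\mathcal{Y}$ is to be regarded as a mechanism $\mathcal{X}\times\mathcal{P}\to\mathcal{Y}$ that simply ignores its profile argument. This is the only bookkeeping step needed to move between the two definitions, and it introduces no difficulty.

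Next, fix an edge $e\in E$ joining profiles $P_i$ and $P_j$, let $X_i\sim P_i$ and $X_j\sim P_j$, and fix an output $y\in\mathcal{Y}$. By the law of total probability, $\Pr(\mathcal{A}(X_i,P_i)=y)=\sum_{x\in\mathcal{X}}P_i(x)\Pr(\mathcal{A}(x)=y)$, and similarly for $P_j$. The key observation is that the local-privacy inequality $\Pr(\mathcal{A}(x)=y)\le e^{\epsilon}\Pr(\mathcal{A}(x')=y)$ holds for \emph{every} pair $(x,x')$, hence uniformly in $x$. Consequently, for any fixed $x'\in\mathcal{X}$, $\Pr(\mathcal{A}(X_i,P_i)=y)=\sum_{x}P_i(x)\Pr(\mathcal{A}(x)=y)\le e^{\epsilon}\Pr(\mathcal{A}(x')=y)$, using $\sum_x P_i(x)=1$.

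Finally I would average this bound against $P_j$: since $\Pr(\mathcal{A}(X_i,P_i)=y)\le e^{\epsilon}\Pr(\mathcal{A}(x')=y)$ holds for every $x'$, taking the $P_j$-weighted combination over $x'$ yields $\Pr(\mathcal{A}(X_i,P_i)=y)\le e^{\epsilon}\sum_{x'}P_j(x')\Pr(\mathcal{A}(x')=y)=e^{\epsilon}\Pr(\mathcal{A}(X_j,P_j)=y)$, which is exactly the $(G,\epsilon)$-profile-based constraint \eqref{profprivconstraint} for the edge $e$. As $e$ was arbitrary, the theorem follows. There is no genuine obstacle here; the only point to be careful about is that the LDP guarantee is a pointwise, worst-case bound, so it can be pulled through both expectations (first over $X_i$, then over $X_j$) without any loss — and in particular the argument never uses that the profiles are close in any metric, which is why it holds for an arbitrary graph $G$ and any choice of $\epsilon$.
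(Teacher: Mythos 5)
Your proposal is correct and follows essentially the same route as the paper: regard $\mathcal{A}$ as ignoring its profile argument, marginalize the output probability over the data drawn from each profile, and push the uniform pointwise LDP bound through the averages (the paper phrases this as bounding the ratio of integrals by $\sup_x \Pr(\mathcal{A}(x)=y)/\inf_{x'}\Pr(\mathcal{A}(x')=y)\le e^{\epsilon}$, which is the same idea as your two-step averaging). No further comment is needed.
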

 
 \begin{proof}
The proof of this theorem lies in that the strong protections given by local differential differential privacy to the observed data also extend to protecting the profile identities. Let $Y_i= \mathcal{A}(X_i,P_i)$, the output of a locally differentially private algorithm $\mathcal{A}$ that protects any two distinct data observations $x$ and $x'$. As local differential privacy mechanisms do not use profile information, the distribution of $Y_i$ depends only on $X_i$ and ignores $P_i$. To prove the generality of this analysis over any graph $G$, we will show the privacy constraint is satisfied for any possible edge $(P_i,P_j)$ of two arbitrary profiles.

\begin{align}
\frac{Pr(Y_i=y)}{Pr(Y_j=y)} &= \frac{\int_\mathcal{X}Pr(Y_i=y|X_i=x)Pr(X_i=x)dx}{\int_\mathcal{X}Pr(Y_i=y|X_j=x)Pr(X_j=x)dx} \\
&\le \frac{\sup_x Pr(Y_i = y | X_i = x)}{\inf_X Pr(Y_j = y|X_j = x)} \\
&\le e^{\epsilon}
\end{align}

If the final inequality did not hold, one would be able to find two values $X$ and $X'$ such that the output $Y$ violates the local differential privacy constraint, which contradicts our assumption on $\mathcal{A}$.

 \end{proof}
 
 \begin{Obs} \label{thm:appendixutility}
Suppose we are in the single-bit setting with two Bernoulli profiles $P_i$ and $P_j$ with parameters $p_i$ and $p_j$ respectively. If $p_i \leq p_j \leq e^{\epsilon} p_j$, then the solution $\alpha$ to~\eqref{eq:onebitprob} satisfies 
\begin{equation}
\alpha \leq\max\{0,\frac{p_j - e^{\epsilon}p_i}{2(p_j - e^{\epsilon}p_i) - (1-e^{\epsilon})}, \frac{p_i - e^{\epsilon} p_j + e^{\epsilon} - 1}{2(p_i - e^{\epsilon} p_j) +  e^{\epsilon} - 1}\}\mbox{.}
\end{equation}

\end{Obs}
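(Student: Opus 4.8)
The plan is to work with the minimization \eqref{eq:onebitprob} directly: since it minimizes $\alpha$, it suffices to exhibit a feasible $\alpha$ that is at most the claimed right-hand side, and the natural candidate is that right-hand side itself. Write $q(p) := p(1-\alpha)+(1-p)\alpha = \alpha + p(1-2\alpha)$ for the probability that the bit-flipping mechanism outputs $1$ when the input bit is drawn from a Bernoulli($p$) profile, so the probability of outputting $0$ is $1-q(p)$. For $\alpha\in[0,\tfrac12]$ we have $1-2\alpha\ge 0$, so $p\mapsto q(p)$ is nondecreasing; combined with the hypothesis $p_i\le p_j$ this gives $q(p_i)\le q(p_j)$ and $1-q(p_i)\ge 1-q(p_j)$. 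Hence the sub-constraints $q(p_i)/q(p_j)\le e^{\epsilon}$ and $(1-q(p_i))/(1-q(p_j))\ge e^{-\epsilon}$ hold automatically, and \eqref{eq:onebitprob} reduces, on $\alpha\in[0,\tfrac12]$, to the two constraints $(a)$ $q(p_j)\le e^{\epsilon}q(p_i)$ and $(b)$ $1-q(p_i)\le e^{\epsilon}(1-q(p_j))$.

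Next I would clear denominators in $(a)$ and $(b)$. Substituting $q(p)=\alpha+p(1-2\alpha)$ and collecting the $\alpha$ terms, $(a)$ is equivalent to $\alpha\,c_a\ge b_a$ with $b_a=p_j-e^{\epsilon}p_i$ and $c_a=2(p_j-e^{\epsilon}p_i)-(1-e^{\epsilon})$, and $(b)$ is equivalent to $\alpha\,c_b\ge b_b$ with $b_b=(e^{\epsilon}p_j-p_i)-(e^{\epsilon}-1)$ and $c_b=2(e^{\epsilon}p_j-p_i)-(e^{\epsilon}-1)$. Two remarks drive the rest: first, $c_a$ and $c_b$ are exactly the denominators appearing in the statement (the second fraction there is $b_b/c_b$ after multiplying numerator and denominator by $-1$); second, both constraints satisfy the structural identity $c=2b+(e^{\epsilon}-1)$ with $e^{\epsilon}-1\ge 0$. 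From this identity I extract: (i) if $b>0$ then $c>0$, so the constraint reads $\alpha\ge b/c$ with $0\le b/c\le\tfrac12$; (ii) if $b\le 0$ then the affine map $\alpha\mapsto \alpha c-b$ is nonnegative at $\alpha=0$ (value $-b$) and at $\alpha=\tfrac12$ (value $(e^{\epsilon}-1)/2$), hence nonnegative throughout $[0,\tfrac12]$, so the constraint is vacuous there; and moreover if additionally $c<0$ then $b/c\ge\tfrac12$.

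With these pieces the verification is a short case split. Let $\alpha^{\star}$ denote the claimed bound $\max\{0,\,b_a/c_a,\,b_b/c_b\}$. If $\alpha^{\star}\ge\tfrac12$, then $\alpha=\tfrac12$ is feasible for \eqref{eq:onebitprob} --- there $q(p_i)=q(p_j)=\tfrac12$, so all ratios equal $1$ --- whence the optimum is at most $\tfrac12\le\alpha^{\star}$. If instead $\alpha^{\star}<\tfrac12$, then by (ii) no constraint with $b_{\bullet}\le 0$ can have $c_{\bullet}<0$ (that would force $b_{\bullet}/c_{\bullet}\ge\tfrac12$ and hence $\alpha^{\star}\ge\tfrac12$); thus each of $(a),(b)$ is either vacuous on $[0,\tfrac12]$ (when $b_{\bullet}\le 0$) or equivalent to $\alpha\ge b_{\bullet}/c_{\bullet}$ (when $b_{\bullet}>0$), and since $\alpha^{\star}\in[0,\tfrac12)$ dominates both $b_a/c_a$ and $b_b/c_b$, the point $\alpha=\alpha^{\star}$ satisfies every constraint, so the optimum is at most $\alpha^{\star}$. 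In either case the optimum of \eqref{eq:onebitprob} is bounded by $\max\{0,\,b_a/c_a,\,b_b/c_b\}$, which is the stated expression; matching $b_a/c_a$ and $b_b/c_b$ to the two fractions in the statement finishes the argument.

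The main obstacle I anticipate is the bookkeeping in that last case split: confirming that the single formula $\max\{0,b_a/c_a,b_b/c_b\}$ is a valid upper bound in every parameter regime --- not just the generic one where $(a)$ and $(b)$ both act as genuine lower bounds $\alpha\ge b_{\bullet}/c_{\bullet}$ --- and in particular handling negative numerators (absorbed by the $\max$ with $0$) and nonpositive denominators (absorbed by the fact that the corresponding fraction is then $\ge\tfrac12$, where $\alpha=\tfrac12$ is already feasible). The genuinely degenerate point $c_a=0$ or $c_b=0$ arises only together with $b_{\bullet}\le 0$, where that constraint is vacuous on $[0,\tfrac12]$ and the term should simply be dropped from the $\max$ rather than divided by zero; I would flag this explicitly. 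The algebra in rearranging $(a)$ and $(b)$ and the endpoint checks in (i)--(ii) is routine.
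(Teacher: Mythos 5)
Your proposal is correct, and its core strategy coincides with the paper's: use $p_i \le p_j$ and $\alpha \le \tfrac12$ to discard the two automatically satisfied ratio constraints of \eqref{eq:onebitprob}, identify the two remaining ones as the binding constraints, and show that the stated expression is a feasible flipping probability, which bounds the minimum. Where you diverge is in how feasibility is verified: the paper appeals to monotonicity of the two ratios in $\alpha$ (leaning on the derivative computation from the One Bit Cluster proof), solves the two equalities to get the fractions $\frac{p_j - e^{\epsilon}p_i}{2(p_j - e^{\epsilon}p_i)-(1-e^{\epsilon})}$ and $\frac{p_i - e^{\epsilon}p_j + e^{\epsilon}-1}{2(p_i - e^{\epsilon}p_j)+e^{\epsilon}-1}$, and takes the maximum, leaving the sign and degeneracy issues implicit. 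You instead clear denominators to get linear constraints $\alpha c \ge b$ and exploit the identity $c = 2b + (e^{\epsilon}-1)$ to run an explicit case split on the signs of $b$ and $c$; this buys a self-contained argument that does not import the monotonicity lemma and that explicitly covers the regimes the paper's sketch skips --- negative numerators (absorbed by the $0$ in the max), nonpositive denominators (where the fraction is $\ge \tfrac12$ and $\alpha=\tfrac12$ is trivially feasible), and the zero-denominator case $c=0$, where the corresponding constraint is vacuous on $[0,\tfrac12]$ and the term should be read as dropped from the maximum. So your route is a more careful, fully rigorous rendering of the same feasibility argument rather than a genuinely different proof idea.
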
 

\begin{proof}

Direct computation shows the desired constraints are met with this value for $\alpha$.

\begin{eqnarray}
\min && \alpha  \label{eq:appendixonebitprob} \\
{\text{subject\ to }} && \alpha \geq 0 \nonumber\\
&&\frac{p_i (1 - \alpha) + (1 - p_i) \alpha}{p_j (1 - \alpha) + (1 - p_j) \alpha} \in [e^{-\epsilon}, e^{\epsilon}] \nonumber\\
&&\frac{(1 - p_i) (1 - \alpha) + p_i \alpha }{(1 - p_j) (1 - \alpha) + p_j \alpha} \in [e^{-\epsilon}, e^{\epsilon}] \mbox{.}\nonumber
\end{eqnarray}

First, we note that by our assumption $p_i \leq p_j$ and $\epsilon \geq 0$, we immediately have two of our constraints trivially satisfied given $\alpha \leq 0.5$, since $p_i (1 - \alpha) + (1 - p_i) \alpha \leq p_j (1 - \alpha) + (1 - p_j) \alpha$ and $(1 - p_i) (1 - \alpha) + p_i \alpha \geq (1 - p_j) (1 - \alpha) + p_j \alpha$.

Two constraints of interest remain: 

\begin{eqnarray}
&&\frac{p_i (1 - \alpha) + (1 - p_i) \alpha}{p_j (1 - \alpha) + (1 - p_j) \alpha} \geq e^{-\epsilon} \label{eq:firstconstr}\\
&&\frac{(1 - p_i) (1 - \alpha) + p_i \alpha }{(1 - p_j) (1 - \alpha) + p_j \alpha} \leq e^{\epsilon} \mbox{.}\label{eq:secconstr}
\end{eqnarray}

We know that these ratios are monotonic in $\alpha$, so to solve these inequalities, it suffices to find the values of $\alpha$ where we have equality on these two constraints. Any values of $\alpha$ larger than this (and less than $1/2$) will therefore satsify the inequality.

For \eqref{eq:firstconstr}, we get $\alpha = \frac{p_j - e^{\epsilon}p_i}{2(p_j - e^{\epsilon}p_i) - (1-e^{\epsilon})}$. Solving \eqref{eq:secconstr} instead, we get $\alpha = \frac{p_i - e^{\epsilon} p_j + e^{\epsilon} - 1}{2(p_i - e^{\epsilon} p_j) +  e^{\epsilon} - 1}$.

Since both constraints must be satisfied simultaneously, we can complete our statement by taking the maximum of the two points given by our constraints, along with knowing $\alpha \geq 0$.

\end{proof}

\end{document}